\theoremstyle{definition}
\newtheorem{definition}{Definition}
\newtheorem*{definition*}{Definition}
\newtheorem{lemma}[definition]{Lemma}
\title{Speeding up quantum circuits simulation using ZX-Calculus}
\author[1]{Tristan Cam}
\author[1]{Simon Martiel}
\affil[1]{Atos Quantum Lab, Les Clayes-sous-Bois, France}
\begin{document}
\maketitle
\begin{abstract}
We present a simple and efficient way to reduce the contraction cost of a tensor network to simulate a quantum circuit. We start by interpreting the circuit as a ZX-diagram. We then use simplification and local complementation rules to sparsify it. 
We find that optimizing graph-like ZX-diagrams improves existing state of the art contraction cost by several order of magnitude. In particular, we demonstrate an average contraction cost 1180 times better for Sycamore circuits of depth 20, and up to 4200 times better at peak performance.
\end{abstract}

{\bf Keywords:} quantum computing, computational complexity, treewidth, tensor network, classical simulation, ZX-calculus, graph rewriting

\section{Introduction}

Simulating quantum circuit via classical algorithms is believed to be an intrinsically hard task. However, as the field of quantum computing quickly grows, so does the need for a better understanding of the limit between what is classically accessible and what is out of reach even for large HPC infrastructures.
Even though the simulation of small circuits (i.e. less than $30$ qubits) is widely accessible through direct linear algebra simulation \cite{guerreschi2020intel,suzuki2021qulacs}, the simulation of larger quantum circuits requires less naive approaches. For instance, the Matrix Product State (MPS) formalism allows to perform lossy simulation of very large quantum systems in shallow circuit regime \cite{vidal2003efficient,schollwock2011density}. Another approach consists in restricting the set of gates to a non-universal one in order to achieve polynomial-time simulation. This method has been successfully applied to Clifford circuits \cite{aaronson2004improved}, circuits with a bounded number of non-Clifford rotations \cite{Bravyi2019simulationofquantum} and match gates/Gaussian operators \cite{jozsa2008matchgates}. Overall, even though all those methods have been successful in their own niche, 
tensor network-based approaches \cite{markov2008simulating,chen2018classical,huang2020classical,lykov2020tensor,gray2021hyper,dudek2019efficient,liang2021fast,vincent2022jet} are the most promising to push back the ``quantum supremacy'' frontier claimed by Google \cite{arute2019quantum} 
and to gain an insight on the true limits of classical computing.
Unlike other approaches, the tensor network contraction framework provides a universal tool to represent and simulate various quantum computations.


\paragraph{Tensor Network contraction.} In this setting, quantum circuits are interpreted as multi-graphs (\textit{networks}) where nodes contain the matrices of the gates of the circuit (\textit{tensors}) and edges represent ``dot products'' between the matrices of two nodes. Such a representation is called a \textit{tensor network}.
Given a tensor network, one can contract each edge of the graph, iteratively merging nodes, until left with a single node. The order in which we decide to contract a tensor network is critical: different contraction orders can lead to simulation times several orders of magnitude apart. Thus, the order finding step plays an important role in tensor-network-based simulation methods.
Although finding an optimal order of contraction is known to be \textit{NP-hard} \cite{Biamonte_2015}, Markov and Shi defined the contraction complexity of a tensor network \cite{markov2008simulating} and showed that the contraction can be done in a time that scales exponentially only in a certain metric of the graph, based on the \textit{treewidth}. This complexity has later been refined by O'Gorman using a different metric \cite{o2019parameterization}.

\paragraph{ZX-calculus and tensor networks.} The ZX-calculus is a graphical language that can represent discrete quantum processes, and in particular post-selected quantum circuits \cite{Coecke2007graphicalcalculus,vilmart2018nearoptimal}. Terms in this language are  tensor networks that can be rewritten using a set of rules that guarantees two properties:\\
\begin{itemize}
    \item soundness, meaning they preserve the underlying linear operator
    \item completeness, meaning that if two diagrams implement the same operator, one can be rewritten into the other \cite{jeandel2020completeness}.
\end{itemize}
ZX-diagrams form a well-structured family of tensor networks where tensors are sparse either in the computational basis ($Z$-spiders) or in the diagonal basis ($X$-spiders). This structure allows in particular for. tensor decomposition $-$ the action of splitting a tensor into two or more tensors, used to truncate MPS in approximate simulations $-$ ``for free" as it precisely corresponds to a diagrammatic rewriting rule that can be applied anytime.

Thus, it is natural to ask whether it is beneficial to contract ZX-diagrams instead of regular tensor networks. And if so, we want to find out if we can use the rewriting rules in ZX-Calculus to optimize a ZX-diagram before contracting it like a tensor network.

Our contribution is a method to simplify and optimize a quantum circuit using ZX-Calculus before the order finding step. We present a naive heuristic that uses graph theory operations \cite{kotzig1968eulerian,perdrix2006modeles,duncan2020graph} compatible with ZX formalism to rewrite a ZX-diagram and improve a ``proxy" score based on treewidth \cite{markov2008simulating,dumitrescu2018benchmarking,gogate2012complete}.
We show that we can decrease the contraction cost by three orders of magnitude on average $-$ and by a factor of 4200 at peak performance $-$ compared to the state of the art order finding algorithm \cite{huang2020classical} for the largest circuits we studied (depth 20 Sycamore circuits \cite{arute2019quantum}).

In Section \ref{tn sim}, we will recall the links between tensor networks and quantum circuit simulation and provide details on the order-finding heuristic used in this paper. In Section \ref{zx}, we will show how to apply ZX-Calculus in order to change the structure of a target tensor network through local operations.
In Section \ref{strat}, we will explain our strategy to simplify and optimize a tensor network before contraction. Finally we will present the results of our pre-processing in Section \ref{res} before discussing its performances and future works in Section \ref{disc}.

\section{Circuit Simulation and Tensor Networks}
\label{tn sim}


\subsection{Tensor network contraction for circuit simulation}

In a typical (strong) simulation setting, given some quantum circuit $C$ and some computational basis vector $|x\rangle$, the aim is to return the amplitude $\langle x |C|0\rangle$.
 Computing this final amplitude $\langle x | C|0\rangle$ requires performing a large number of generalized matrix-matrix multiplications (called ``tensor dots").

The idea behind tensor network contraction is to generate a multi-graph where nodes are tensors and whose set of edges represent all tensor dot operations that need to be performed. In that setting, the vector $C|0\rangle$ can be efficiently represented by a graph whose structure closely resembles that of $C$. Once this tensor network is built, one can easily add degree one nodes to this network containing $1-$qubit linear forms $\langle 0|$ or $\langle 1|$ in order to obtain a network representing the scalar quantity $\langle x |C|0\rangle$. Figure \ref{fig:tn_example} gives an example of a circuit and its underlying tensor network structure. For simplicity in the rest of this work, we will assume that $x=0^n$.

\begin{figure}[h!]
    \centering
        \begin{tikzpicture}
        \draw (0, 0) node[rectangle, draw, inner sep=2pt](h){$H$};
        \draw (1, 0) node[circle, fill=black, draw, inner sep=1pt](ctrl){};
        \draw (1, -1) node[circle, draw, inner sep=3.5pt](tgt){};
        \draw (1, -1-0.16) -- (1, -1+2*0.124);
        \draw (ctrl) --(tgt);
        \draw (2, -1) node[rectangle, draw, inner sep=2pt](rz){$R_Z(\theta)$};
        
        \draw (3, 0) node[circle, fill=black, draw, inner sep=1pt](ctrl2){};
        \draw (3, -1) node[circle, draw, inner sep=3.5pt](tgt2){};
        \draw (3, -1-0.16) -- (3, -1+2*0.124);
        \draw (ctrl2) --(tgt2);

        \draw (4, 0) node[circle, fill=black, draw, inner sep=1pt](cz1){};
        \draw (4, -1) node[circle, fill=black, draw, inner sep=1pt](cz2){};
        \draw (cz1) --(cz2);
        
        \draw (-1, 0) -- (h) -- (ctrl) -- (ctrl2) -- (cz1) -- (5, 0);
        \draw (-1, -1) -- (rz) -- (cz2) --  (5, -1);
    \end{tikzpicture}~~~~
    \begin{tikzpicture}
        \draw (0, 0) node[circle, draw, inner sep=2pt, fill=black](p0){};
        \draw (0,-1) node[circle, draw, inner sep=2pt, fill=black](p1){};
        \draw (1, 0) node[circle, draw, inner sep=2pt, fill=black](h){};
        \draw(2, -0.5) node[circle, draw, inner sep=2pt, fill=black](cnot1){};
        \draw(4, -0.5) node[circle, draw, inner sep=2pt, fill=black](cnot2){};
        \draw(3, -1) node[circle, draw, inner sep=2pt, fill=black](rz){};
        \draw(5, -0.5) node[circle, draw, inner sep=2pt, fill=black](cz){};
        \draw (6, 0) node[circle, draw, inner sep=2pt, fill=black](lf0){};
        \draw (6, -1) node[circle, draw, inner sep=2pt, fill=black](lf1){};
        \draw (p0) -- (h) --(cnot1) --(cnot2);
        \draw (p1) -- (cnot1) --(rz)--(cnot2);
        \draw (cnot2) ..  controls (4.5, -0.25) .. (cz);
        \draw (cnot2) ..  controls (4.5, -0.75) .. (cz);
        \draw (cz) -- (lf0);
        \draw (cz) -- (lf1);
    \end{tikzpicture}
    \caption{A circuit $C$ and its corresponding tensor network structure, considering a initial product state and final projector.}
    \label{fig:tn_example}
\end{figure}
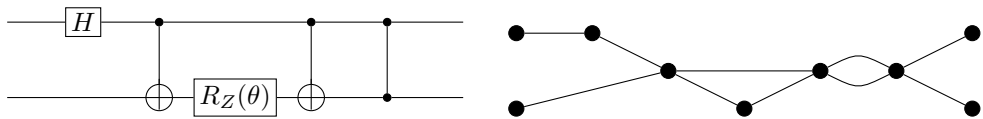

Given such a tensor network structure, one can compute the resulting amplitude by sequentially contracting each edge of the network, each time replacing the (mutli-)edge extremities by a single node. However, the final contraction cost, i.e. the sum of all the FLOPs required to contract each edge in the network, strongly depends on the contraction order of the edges.

In their paper \cite{markov2008simulating}, Markov and Shi show that for any graph $G = (V,E)$, the contraction cost of $G$ is equal to the treewidth of $G^\ast$, the line-graph of $G$. Furthermore, given a tree decomposition of $G^\ast$ of width d, there exists a deterministic algorithm that outputs a contraction ordering $\pi$ with $cc(\pi) \leq d$ in polynomial time (see Algorithm \ref{algo1}).

\subsection{Heuristics for good order extraction}
\label{order}

Our contraction order finder uses a community-based approach described by Gray and Kourtis \cite{gray2021hyper}: find the best partition of our tensor network in communities using the Louvain community detection algorithm \cite{blondel2008fast}, then run QuickBB \cite{gogate2012complete} to find a good tree decomposition of the line-graphs of each individual community, and use Markov and Shi's algorithm to turn the tree decompositions into contraction orders. Due to the relatively small size of communities (80 to 110 nodes for depth 20 Sycamore circuits), the search for an optimal tree decomposition is much easier and the quality of the contraction order is improved. When the actual contraction is performed, each community is collapsed into a single node called \textit{community vertex}. We call this partially contracted network the \textit{metagraph} of the tensor network, according to a certain partition. Thus after adding the individual community contraction orders to the main contraction order, we find a contraction order of the metagraph using the same method. In the end the resulting order is such that all communities will be contracted nearly at no cost (FLOPs-wise),
while most FLOPS come from the contraction of the resulting metagraph.
This is why for the \textit{index slicing} step - a standard method necessary to fit the tensor network contraction in memory by dividing the main contraction task into many sub-tasks \cite{chen2018classical} - we focus on the indices of the intermediate tensors in the metagraph (see Algorithm \ref{algo1}).

\begin{algorithm}
\caption{Community-based order finder}
\label{algo1}
\begin{algorithmic}
\Require{tensor network} 
\Ensure{contraction order}
\Statex
\Function{TreeDecompositionToOrder}{$\mathcal{T}$}\Comment{Markov and Shi's method}
  \State { $\triangleright$ Outputs vertices of $\mathcal{T}$ i.e. vertices of the line-graph: edges in the original graph}
  \While{$\mathcal{T}$ is not the empty graph}                    
    \State {$l$ $\gets$ {leaf of $\mathcal{T}$}}
    \If{$l$ is the last vertex of $\mathcal{T}$}
      \State {\textbf{output} vertices in $B_l$ in any order} \Comment{$B_l$ is the bag of vertices in $l$}
      \State {remove $l$ from $\mathcal{T}$}
    \Else
      \State {$l'\gets$ parent of $l$}
      \If{$B_l\subseteq B_{l'}$}
        \State {remove $l$ from $\mathcal{T}$ and repeat this process}
      \Else
         \State {$e\in B_l-B_{l'}$}
         \State {\textbf{output} $e$ and remove it from $\mathcal{T}$}
      \EndIf
    \EndIf
  \EndWhile
\EndFunction
\Comment{The number of steps is polynomial in the size of $\mathcal{T}$}
\Statex
\Statex
\Function{FindOrder}{$network$}
  \State {$G\gets$ underlying graph from input $network$}
  \State {$P\gets$ \Call{LouvainCommunityPartition}{$G$}}
  \State {$order\gets$ empty list} \Comment{list of edges to be contracted}
  \For{\textbf{each} community $c$ in $P$}
    \State {$lg_c\gets$\Call{LineGraph}{$c$}}
    \State {$\mathcal{T}_c\gets$\Call{QuickBB}{$lg_c$}} \Comment{The tree decomposition heuristic}
    \State {$order$.insert(\Call{TreeDecompositionToOrder}{$\mathcal{T}_c$})}
  \EndFor {\textbf{ each}}
  \State {$mg\gets$\Call{GraphFromCommunityVertices}{$P$}} \Comment{Constructing the metagraph}
  \State {$lg\gets$\Call{LineGraph}{$mg$}}
  \State {$\mathcal{T}\gets$\Call{QuickBB}{$lg$}}
  \State {$order$.insert(\Call{TreeDecompositionToOrder}{$\mathcal{T}$})}
  \State {$slices\gets$\Call{FindSlices}{$mg$, $order$}}\Comment{Greedily slices the indices of the biggest contraction occuring when performing $order$ on $network$}
  \State \Return{$order$, $slices$}
\EndFunction
\end{algorithmic}
\end{algorithm}

\paragraph{Pre-contraction.}
\label{prec}

In most applications, computing a decently good contraction cost can be expensive. 
We would like to make tensor network manipulation easier for computing contraction order/cost, and in particular to estimate the treewidth of its line graph.
Thus we introduce a way to compress our tensor network into a smaller and denser one that encapsulates the structure of the original network while preserving the treewidth of its line-graph.

\begin{definition}[Pre-contraction]
Let $G$ be a tensor network of maximum degree 3. Pre-contracting $G$ consists in merging all \textit{leaves} (degree 1 nodes) with $G$ to their neighbor recursively until $G$ has no \textit{branch} ending in a leaf (only cycles). All cycle edges that do not belong to a triangle in $G$ are then contracted.
\end{definition}
Only a small number of tensor product is required during this pre-contraction step. A ``condensed" version of the original tensor network is obtained, enable easier order finding and further improving contraction cost.

Note that this ``mock contraction" can also be used on a ZX-diagram before computing the \textit{treewidth} of its \textit{line-graph} ($tw(G^\ast)$), yielding an approximation of at least one order of magnitude faster.
In our benchmarks, we use this pre-contraction in conjunction with the QuickBB algorithm \cite{gogate2012complete} to approximate $tw(G^\ast)$ (see Section \ref{appendix} for more details). We call this method $quick\_tw$. 

\section{Tensor Network Simplification using ZX-Calculus}
\label{zx}

In their paper, Duncan et al. \cite{duncan2020graph} provide a brief overview of ZX-calculus:
ZX-calculus is a diagrammatic language that resembles the standard quantum circuit
notation. A ZX-\textit{diagram} consists of:
\begin{itemize}
    \item \textit{wires}: all wires entering the diagram from the left are \textit{inputs} and all wires exiting to the right are \textit{outputs}.
    \item \textit{spiders}: spiders are linear maps which can have any number of input or output wires. There are two varieties: \textit{Z}-spiders depicted as green dots and \textit{X}-spiders depicted as red dots. We will represent Hadamard edges as yellow boxes.
\end{itemize}

Written explicitly in Dirac notation, these linear maps are:
\begin{figure}[h!]
\centering
\includegraphics[scale=0.45]{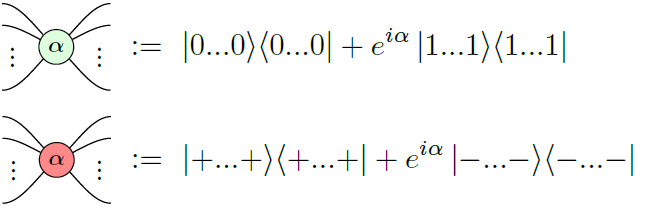}
\end{figure}

Here are a few examples of gates interpreted as ZX-diagrams:

\begin{tikzpicture}[scale=0.75, transform shape]
    \node[shape=circle,draw=black,fill=white!80!green] (A) at (0,0) {$\alpha$};
    \node[shape=circle,draw=white] (r) at (-1.75,0) {\Large $R_{\alpha} =$};
    \coordinate (B) at (-1,0);
    \coordinate (C) at (1,0);
    \draw[-] (A) -- (B);
    \draw[-] (A) -- (C);
    
    \node[shape=circle,draw=black,fill=white!50!red] (A) at (6,-0.35) {};
    \node[shape=circle,draw=black,fill=white!80!green] (D) at (6,0.3) {};
    \node[shape=circle,draw=white] (r) at (3.75,0) {\Large $CNOT =$};
    \coordinate (B) at (5,-0.35);
    \coordinate (C) at (7,-0.35);
    \coordinate (E) at (5,0.3);
    \coordinate (F) at (7,0.3);
    \draw[-] (A) -- (B);
    \draw[-] (A) -- (C);
    \draw[-] (D) -- (E);
    \draw[-] (D) -- (F);
    \draw[-] (A) -- (D);
    
    \node[shape=circle,draw=white] (r) at (9,0) {\Large $H =$};
    \coordinate (A) at (9.7,0);
    \coordinate (B) at (11.2,0);
    \draw[-] (A) -- (B);
    \node[shape=rectangle,draw=black,fill=white!20!yellow] (C) at (10.45,0) {};
    \node[shape=circle,draw=white] (r2) at (11.6,-0.075) {\Large $=$};
    \coordinate (A) at (12,0);
    \coordinate (B) at (13.25,0);
    \draw[-,color=white!30!blue] (A) -- (B);
\end{tikzpicture}

To simulate Google's Sycamore circuits, we need additional 1-qubit gates:

\begin{equation*}
\sqrt{X} =\frac{1}{\sqrt{2}} \begin{bmatrix}
~1 & -i~\\
~-i & 1~
\end{bmatrix}, 
 \sqrt{Y} =\frac{1}{\sqrt{2}} \begin{bmatrix}
~1 & -1~\\
~1 & 1~
\end{bmatrix}, 
 \sqrt{W} =\frac{1}{\sqrt{2}} \begin{bmatrix}
~1 & -\sqrt{i}~ \\
~\sqrt{-i} & 1~
\end{bmatrix}
\end{equation*}

\hspace{-2mm}
\begin{tikzpicture}[scale=0.75, transform shape]
    \node[shape=circle,draw=black,fill=white!50!red,minimum size = 0.55cm,inner sep=0pt] (A) at (1.6,0) {$\frac{\pi}{2}$};
    \node[] (r) at (-3.25,-0.1) {\Large Represented in ZX:};
    \node[] (r) at (-0.25,0) {\Large $\sqrt{X} =$};
    \coordinate (B) at (0.6,0);
    \coordinate (C) at (2.6,0);
    \draw[-] (A) -- (B);
    \draw[-] (A) -- (C);
    
    \node[shape=circle,draw=black,fill=white!80!green,minimum size = 0.55cm,inner sep=0pt] (A) at (6.05,0) {$\frac{\pi}{2}$};
    \node[shape=circle,draw=black,fill=white!50!red,minimum size = 0.55cm,inner sep=0pt] (D) at (5.3,0) {$\frac{\pi}{2}$};
    \node[shape=circle,draw=black,fill=white!50!red,minimum size = 0.55cm,inner sep=0pt] (E) at (6.8,0) {\mbox{-}$\frac{\pi}{2}$};
    \node[] (r) at (3.55,0) {\Large $,~\sqrt{Y} =$};
    \coordinate (B) at (4.55,0);
    \coordinate (C) at (7.55,0);
    \draw[-] (D) -- (B);
    \draw[-] (D) -- (A);
    \draw[-] (E) -- (A);
    \draw[-] (E) -- (C);
    
    \node[] (r) at (8.55,0) {\Large $,~\sqrt{W} =$};
    \node[shape=circle,draw=black,fill=white!50!red,minimum size = 0.55cm,inner sep=0pt] (A) at (11.15,0) {$\frac{\pi}{2}$};
    \node[shape=circle,draw=black,fill=white!80!green,minimum size = 0.55cm,inner sep=0pt] (D) at (10.4,0) {\mbox{-}$\frac{\pi}{4}$};
    \node[shape=circle,draw=black,fill=white!80!green,minimum size = 0.55cm,inner sep=0pt] (E) at (11.9,0) {$\frac{\pi}{4}$};
    \coordinate (B) at (9.65,0);
    \coordinate (C) at (12.65,0);
    \draw[-] (D) -- (B);
    \draw[-] (D) -- (A);
    \draw[-] (E) -- (A);
    \draw[-] (E) -- (C);
\end{tikzpicture}

And the 2-qubit gate parametrized by $\theta$ and $\phi$:

\vspace{3mm}
\begin{tikzpicture}[scale=0.71, transform shape]
    \node[] (r) at (0,0) {
        \Large fSim($\theta,\phi$) = $\begin{bmatrix}
        ~1 & 0 & 0 &0~\\
        ~0 & \cos(\theta) & -i\sin(\theta) &0~\\
        ~0 & -i\sin(\theta) & \cos(\theta) &0~\\
        ~0 & 0 & 0 &e^{-i\phi}~
        \end{bmatrix}$
        \Large =};
    \node[shape=circle,draw=black,fill=white!50!red,minimum size = 0.55cm,inner sep=0pt] (A) at (6.5,0.65) {$\frac{\pi}{2}$};
    \node[shape=circle,draw=black,fill=white!50!red,minimum size = 0.55cm,inner sep=0pt] (AA) at (6.5,-0.65) {$\frac{\pi}{2}$};
    \coordinate (B) at (5.5,0.65);
    \coordinate (C) at (5.5,-0.65);
    \draw[-] (A) -- (B);
    \draw[-] (AA) -- (C);
    \node[shape=circle,draw=black,fill=white!80!green] (D) at (7.25,0.65) {};
    \node[shape=circle,draw=black,fill=white!50!red] (DDD) at (7.25,0) {};
    \node[shape=circle,draw=black,fill=white!80!green,minimum size = 0.5cm,inner sep=0pt] (DDDD) at (7.85,0) {$\theta$};
    \node[shape=circle,draw=black,fill=white!80!green] (DD) at (7.25,-0.65) {};
    \draw[-] (A) -- (D);
    \draw[-] (AA) -- (DD);
    \draw[-] (D) -- (DDD);
    \draw[-] (DD) -- (DDD);
    \draw[-] (DDDD) -- (DDD);
    \node[shape=circle,draw=black,fill=white!50!red,minimum size = 0.55cm,inner sep=0pt] (E) at (8.1,0.65) {\mbox{-}$\frac{\pi}{2}$};
    \node[shape=circle,draw=black,fill=white!50!red,minimum size = 0.55cm,inner sep=0pt] (EE) at (8.1,-0.65) {\mbox{-}$\frac{\pi}{2}$};
    \draw[-] (E) -- (D);
    \draw[-] (EE) -- (DD);
    \node[shape=rectangle,draw=black,fill=white!20!yellow] (F) at (8.9,0.65) {};
    \node[shape=rectangle,draw=black,fill=white!20!yellow] (FF) at (8.9,-0.65) {};
    \draw[-] (E) -- (F);
    \draw[-] (EE) -- (FF);
    \node[shape=circle,draw=black,fill=white!80!green] (G) at (9.5,0.65) {};
    \node[shape=circle,draw=black,fill=white!50!red] (GGG) at (9.5,0) {};
    \node[shape=circle,draw=black,fill=white!80!green,minimum size = 0.5cm,inner sep=0pt] (GGGG) at (10.1,0) {$\theta$};
    \node[shape=circle,draw=black,fill=white!80!green] (GG) at (9.5,-0.65) {};
    \draw[-] (F) -- (G);
    \draw[-] (FF) -- (GG);
    \draw[-] (G) -- (GGG);
    \draw[-] (GG) -- (GGG);
    \draw[-] (GGGG) -- (GGG);
    \node[shape=rectangle,draw=black,fill=white!20!yellow] (H) at (10.1,0.65) {};
    \node[shape=rectangle,draw=black,fill=white!20!yellow] (HH) at (10.1,-0.65) {};
    \draw[-] (G) -- (H);
    \draw[-] (GG) -- (HH);
    \node[shape=circle,draw=black,fill=white!80!green] (I) at (10.85,0.65) {};
    \node[shape=circle,draw=black,fill=white!50!red] (III) at (10.85,0) {};
    \node[shape=circle,draw=black,fill=white!80!green,minimum size = 0.55cm,inner sep=0pt] (IIII) at (11.45,0) {$\frac{\phi}{2}$};
    \node[shape=circle,draw=black,fill=white!80!green] (II) at (10.85,-0.65) {};
    \draw[-] (H) -- (I);
    \draw[-] (HH) -- (II);
    \draw[-] (I) -- (III);
    \draw[-] (II) -- (III);
    \draw[-] (IIII) -- (III);
    \node[shape=circle,draw=black,fill=white!80!green,minimum size = 0.55cm,inner sep=0pt] (J) at (11.85,0.65) {\mbox{-}$\frac{\phi}{2}$};
    \node[shape=circle,draw=black,fill=white!80!green,minimum size = 0.55cm,inner sep=0pt] (JJ) at (11.85,-0.65) {\mbox{-}$\frac{\phi}{2}$};
    \draw[-] (J) -- (I);
    \draw[-] (JJ) -- (II);
    \coordinate (K) at (12.85,0.65);
    \coordinate (L) at (12.85,-0.65);
    \draw[-] (J) -- (K);
    \draw[-] (JJ) -- (L);
\end{tikzpicture}

\subsection{Graph-like diagrams}
\label{rules}
Consider the following rewriting rules, all valid in ZX-calculus:
\vspace{3mm}
\begin{enumerate}[font={\bfseries},label={Rule \arabic*.}]
\item

\begin{tikzpicture}[scale=0.6, transform shape]
    \node[shape=circle,draw=black,fill=white!80!green] (A) at (0,0) {$\alpha$};
    \coordinate (B) at (-1.5,0);
    \coordinate (C) at (1.5,0);
    \draw[-] (A) -- (B);
    \draw[-] (A) -- (C);
    \node[shape=rectangle,draw=black,fill=white!20!yellow] (C) at (-0.75,0) {};
    \node[shape=rectangle,draw=black,fill=white!20!yellow] (C) at (0.75,0) {};
    \node[shape=rectangle,draw=white] (eq) at (2.5,-0.05) {\huge $\Longleftrightarrow$};
    \node[shape=circle,draw=black,fill=white!50!red] (A2) at (4.5,0) {$\alpha$};
    \coordinate (B2) at (3.5,0);
    \coordinate (C2) at (5.5,0);
    \draw[-] (A2) -- (B2);
    \draw[-] (A2) -- (C2);
\end{tikzpicture}
 ,~
\begin{tikzpicture}[scale=0.6, transform shape]
    \node[shape=circle,draw=black,fill=white!50!red] (A) at (0,0) {$\alpha$};
    \coordinate (B) at (-1.5,0);
    \coordinate (C) at (1.5,0);
    \draw[-] (A) -- (B);
    \draw[-] (A) -- (C);
    \node[shape=rectangle,draw=black,fill=white!20!yellow] (C) at (-0.75,0) {};
    \node[shape=rectangle,draw=black,fill=white!20!yellow] (C) at (0.75,0) {};
    \node[shape=rectangle,draw=white] (eq) at (2.5,-0.05) {\huge $\Longleftrightarrow$};
    \node[shape=circle,draw=black,fill=white!80!green] (A2) at (4.5,0) {$\alpha$};
    \coordinate (B2) at (3.5,0);
    \coordinate (C2) at (5.5,0);
    \draw[-] (A2) -- (B2);
    \draw[-] (A2) -- (C2);
\end{tikzpicture}

Toggling from one spider type to another by creating Hadamard edges.

\vspace{7mm}
\item

\begin{tikzpicture}[scale=0.6, transform shape]
    \coordinate (A) at (-0.75,0);
    \coordinate (B) at (0.75,0);
    \draw[-] (A) -- (B);
    \node[shape=rectangle,draw=black,fill=white!20!yellow] (C) at (-0.25,0) {};
    \node[shape=rectangle,draw=black,fill=white!20!yellow] (C) at (0.25,0) {};
    \node[shape=rectangle,draw=white] (eq) at (2,-0.05) {\huge $\Longleftrightarrow$};
    \coordinate (A2) at (3.5,0);
    \coordinate (B2) at (5,0);
    \draw[-] (A2) -- (B2);
\end{tikzpicture}

Two Hadamards cancel each other.

\vspace{7mm}
\item
\label{rule:unfusion}
\begin{tikzpicture}[scale=0.6, transform shape]
    \node[shape=circle,draw=black,fill=white!80!green] (A) at (0,0) {$\alpha$};
    \node[shape=circle,draw=black,fill=white!80!green] (B) at (1.5,0) {$\beta$};
    \coordinate (C) at (-0.75,-0.5);
    \coordinate (D) at (-0.75,0.5);
    \coordinate (E) at (2.25,-0.5);
    \coordinate (F) at (2.25,0.5);
    \coordinate (C1) at (-0.6,-0.2);
    \coordinate (D1) at (-0.6,0.2);
    \coordinate (E1) at (2.1,-0.2);
    \coordinate (F1) at (2.1,0.2);
    \draw[-] (A) -- (B);
    \draw[-] (A) -- (C);
    \draw[-] (A) -- (D);
    \draw[-] (B) -- (E);
    \draw[-] (B) -- (F);
    \draw[loosely dotted] (C1) -- (D1);
    \draw[loosely dotted] (E1) -- (F1);
    \node[shape=rectangle,draw=white] (eq) at (3.15,-0.05) {\huge $\Longleftrightarrow$};
    \node[shape=circle,draw=black,fill=white!80!green] (A2) at (5.25,0) {$\alpha+\beta$};
    \coordinate (C2) at (4,-0.5);
    \coordinate (D2) at (4,0.5);
    \coordinate (E2) at (6.5,-0.5);
    \coordinate (F2) at (6.5,0.5);
    \coordinate (C3) at (4.15,-0.2);
    \coordinate (D3) at (4.15,0.2);
    \coordinate (E3) at (6.35,-0.2);
    \coordinate (F3) at (6.35,0.2);
    \draw[-] (A2) -- (C2);
    \draw[-] (A2) -- (D2);
    \draw[-] (A2) -- (E2);
    \draw[-] (A2) -- (F2);
    \draw[loosely dotted] (C3) -- (D3);
    \draw[loosely dotted] (E3) -- (F3);
\end{tikzpicture}

Spider fusion and unfusion.

\vspace{7mm}
\item

\begin{tikzpicture}[scale=0.6, transform shape]
    \coordinate (A) at (-0.75,0);
    \coordinate (B) at (0.75,0);
    \draw[-] (A) -- (B);
    \node[shape=circle,draw=black,fill=white!80!green] (A) at (0,0) {};
    \node[shape=rectangle,draw=white] (eq) at (2,-0.05) {\huge $\Longleftrightarrow$};
    \coordinate (A2) at (3.5,0);
    \coordinate (B2) at (5,0);
    \draw[-] (A2) -- (B2);
\end{tikzpicture}

Cancelling trivial rotations.
\end{enumerate}


Using this set of rules, we can transform any ZX-diagram into the following form.

\begin{definition}[Graph-like diagram] \label{def:graph_like}
A ZX-diagram is \textit{graph-like} if:
\begin{enumerate}
    \item All spiders are Z-spiders.
    \item All edges are Hadamard.
    \item There are no parallel Hadamard edges or self-loops.
    \item Every input or output is connected to a Z-spider and every Z-spider is connected to at most one input or output.
\end{enumerate}

\end{definition}

\begin{lemma}
Every ZX-diagram can be transformed into a graph-like ZX-diagram.
\end{lemma}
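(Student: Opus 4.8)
The plan is to proceed constructively, applying the four rewriting rules in a disciplined order so that each of the four defining conditions of Definition~\ref{def:graph_like} is established without disturbing the ones already achieved. First I would deal with condition~1: since every edge of a ZX-diagram is (by convention) either a plain edge or a Hadamard edge, I apply Rule~1 to every red ($X$) spider, converting it to a green ($Z$) spider at the cost of sandwiching each of its incident edges with a pair of Hadamard boxes. After this step all spiders are $Z$-spiders. Next, condition~2: along every plain edge that does not already carry a Hadamard box I can \emph{insert} a pair of Hadamard boxes using Rule~2 read right-to-left, and then I use Rule~2 left-to-right (two Hadamards cancel) to collapse any even-length chain of Hadamard boxes sitting on a single wire; more carefully, I push all Hadamard boxes around until each wire connecting two spiders carries either zero or one box, and then insert a pair where there are zero. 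The bookkeeping here is just counting boxes mod~2 on each wire, so every edge ends up Hadamard.

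For condition~4 I handle the boundary: if an input or output wire is attached directly to another boundary wire (the identity wire) or to a spider that also touches another boundary, I use Rule~3 (unfusion) to split off a fresh arity-two $Z$-spider of phase $0$ on that wire, and Rule~2 to make the new connecting edge Hadamard; doing this for each offending boundary ensures every input/output meets a dedicated $Z$-spider connected to at most one boundary. Finally, condition~3: parallel Hadamard edges between two spiders and Hadamard self-loops can be removed using the derived identities that follow from Rules~1--4 together with the standard spider laws --- a Hadamard self-loop on a $Z$-spider contributes only a global scalar and is simply deleted, while a pair of parallel Hadamard edges between two $Z$-spiders likewise cancels (this is the ``parallel edges annihilate'' law of graph-like diagrams, provable by the Euler/colour-change identity). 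Any spider that becomes isolated or arity-one with phase $0$ in the process is cleared by Rule~4.

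The one point requiring care --- and the step I expect to be the main obstacle --- is verifying that the clean-up for conditions~3 and~4 does not reintroduce violations of conditions~1 or~2, i.e.\ that the process actually terminates. To see termination I would argue by a monovariant: order the conditions $1,2,4,3$ and observe that the operations used to fix condition~$k$ only ever act locally and, after condition~1 is globally true, never create a non-$Z$ spider; the box-counting argument for condition~2 is confluent once no new plain edges are created; the boundary surgery for condition~4 strictly decreases the number of boundary-adjacent irregularities and creates only Hadamard edges and phase-$0$ $Z$-spiders; and the parallel-edge/self-loop removal for condition~3 strictly decreases the number of edges while, again, producing only $Z$-spiders and Hadamard edges. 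Hence after finitely many steps all four conditions hold simultaneously, and since every rule invoked is sound in the ZX-calculus the resulting graph-like diagram denotes the same linear map (up to the scalars we have tracked), which completes the proof.
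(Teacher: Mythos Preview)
Your overall strategy mirrors the paper's (Rule~1 to make all spiders green, Rule~2 to normalise Hadamard boxes, then deal with plain edges and boundaries), and your handling of conditions~3 and~4 together with the termination argument is in fact more careful than the paper's own sketch. However, your treatment of condition~2 contains a genuine gap. You propose to reduce each wire to zero or one Hadamard box and then ``insert a pair where there are zero'', concluding that every edge ends up Hadamard. But inserting a pair of boxes via Rule~2 read backward produces a wire carrying \emph{two} boxes, and by Rule~2 read forward that is equivalent to a plain wire again, not a Hadamard edge. Rule~2 alone preserves the parity of boxes on a wire, so no amount of inserting and cancelling pairs will turn a zero-box wire into a one-box (Hadamard) edge.

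The repair is either to additionally insert a phase-$0$ $Z$-spider between the two new boxes (Rule~4 read backward), which yields two genuine Hadamard edges at the cost of an extra spider, or --- as the paper does --- to use Rule~3 in the \emph{fusion} direction: any plain edge between two green spiders is eliminated by merging its endpoints into a single spider. The paper's route is cleaner because it never grows the diagram, and the parallel Hadamard edges and self-loops that can arise from such fusions are precisely what your condition-3 clean-up step is designed to remove.
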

\begin{proof}
Starting with an arbitrary ZX-diagram, we can turn all red spiders into
green spiders surrounded by Hadamard gates using the first rule. We then remove excess Hadamards via the second rule. Any non-Hadamard edge is then removed by fusing the adjacent spiders with the third rule. We finally place trivial Z-spiders on every open wire that remain (see \cite{duncan2020graph} for a more in-depth demonstration).
\end{proof}

\begin{figure}[h!]
\centering
\begin{tikzpicture}[scale=0.75, transform shape]
    \node[shape=circle,draw=black,fill=white!50!red] (A) at (0,0) {$\alpha$};
    \node[shape=circle,draw=black,fill=white!50!red] (B) at (1,0) {};
    \node[shape=circle,draw=black,fill=white!80!green] (C) at (1,1) {};
    \node[shape=circle,draw=black,fill=white!80!green] (D) at (1,2) {$\beta$};
    \node[shape=circle,draw=black,fill=white!50!red] (E) at (0,1) {};
    \node[shape=circle,draw=black,fill=white!80!green] (F) at (0,2) {};
    \node[shape=circle,draw=black,fill=white!50!red] (G) at (2,1) {};
    \node[shape=circle,draw=black,fill=white!80!green] (H) at (2,2) {};
    \coordinate (F2) at (-1.25,2);
    \coordinate (E2) at (-1.25,1);
    \coordinate (A2) at (-1.25,0);
    \coordinate (H2) at (3.25,2);
    \coordinate (G2) at (3.25,1);
    \coordinate (B2) at (3.25,0);
    \draw[-] (B) -- (C);
    \draw[-] (A) -- (B);
    \draw[-] (F) -- (D);
    \draw[-] (E) -- (F);
    \draw[-] (E) -- (C);
    \draw[-] (G) -- (H);
    \draw[-] (C) -- (G);
    \draw[-] (D) -- (H);
    \draw[-] (F) -- (F2);
    \draw[-] (E) -- (E2);
    \draw[-] (A) -- (A2);
    \draw[-] (H) -- (H2);
    \draw[-] (G) -- (G2);
    \draw[-] (B) -- (B2);
    \node[shape=rectangle,draw=black,fill=white!20!yellow] (h) at (2.6,1) {};
    \node[shape=rectangle,draw=black,fill=white!20!yellow] (h) at (-0.7,2) {};
    \node[shape=rectangle,draw=black,fill=white!20!yellow] (h) at (-0.7,0) {};
    \node[shape=rectangle,draw=white] (eq) at (4,1) {\Large $\Longleftrightarrow$};
    
    \node[shape=circle,draw=black,fill=white!80!green] (B) at (7,0) {$\alpha$};
    \node[shape=circle,draw=black,fill=white!80!green] (C) at (7,1) {};
    \node[shape=circle,draw=black,fill=white!80!green] (D) at (7,2) {$\beta$};
    \node[shape=circle,draw=black,fill=white!80!green] (E) at (6,1) {};
    \node[shape=circle,draw=black,fill=white!80!green] (G) at (8,1) {};
    \coordinate (F2) at (4.75,2);
    \coordinate (E2) at (4.75,1);
    \coordinate (A2) at (4.75,0);
    \coordinate (H2) at (9.25,2);
    \coordinate (G2) at (9.25,1);
    \coordinate (B2) at (9.25,0);
    \draw[-] (B) -- (C);
    \draw[-] (E) -- (D);
    \draw[-] (E) -- (C);
    \draw[-] (G) -- (D);
    \draw[-] (C) -- (G);
    \draw[-] (D) -- (F2);
    \draw[-] (E) -- (E2);
    \draw[-] (B) -- (A2);
    \draw[-] (D) -- (H2);
    \draw[-] (G) -- (G2);
    \draw[-] (B) -- (B2);
    \node[shape=rectangle,draw=black,fill=white!20!yellow,rotate=45] (h) at (6.5,1.5) {};
    \node[shape=rectangle,draw=black,fill=white!20!yellow,rotate=45] (h) at (7.5,1.5) {};
    \node[shape=rectangle,draw=black,fill=white!20!yellow] (h) at (6,2) {};
    \node[shape=rectangle,draw=black,fill=white!20!yellow] (h) at (5.4,1) {};
    \node[shape=rectangle,draw=black,fill=white!20!yellow] (h) at (6.5,1) {};
    \node[shape=rectangle,draw=black,fill=white!20!yellow] (h) at (7.5,1) {};
    \node[shape=rectangle,draw=black,fill=white!20!yellow] (h) at (8,0) {};
    \node[shape=rectangle,draw=black,fill=white!20!yellow] (h) at (7,0.55) {}; 

    \node[shape=rectangle,draw=white] (eq) at (9.9,1) {\Large $\Longleftrightarrow$};

    \node[shape=circle,draw=black,fill=white!80!green] (B) at (13,0) {$\alpha$};
    \node[shape=circle,draw=black,fill=white!80!green] (C) at (13,1) {};
    \node[shape=circle,draw=black,fill=white!80!green] (D) at (13,2) {$\beta$};
    \node[shape=circle,draw=black,fill=white!80!green] (E) at (12,1) {};
    \node[shape=circle,draw=black,fill=white!80!green] (G) at (14,1) {};
    \node[shape=circle,draw=black,fill=white!80!green] (F2) at (11.75,2) {};
    \node[shape=circle,draw=black,fill=white!80!green] (E2) at (11,1) {};
    \node[shape=circle,draw=black,fill=white!80!green] (B2) at (14.25,0) {};
    \draw[-] (B) -- (C);
    \draw[-] (E) -- (D);
    \draw[-] (E) -- (C);
    \draw[-] (G) -- (D);
    \draw[-] (C) -- (G);
    \draw[-] (D) -- (F2);
    \draw[-] (E) -- (E2);
    \draw[-] (B) -- (B2);
    \draw[-] (F2) -- (11.4,2);
    \draw[-] (E2) -- (10.65,1);
    \draw[-] (B2) -- (14.6,0);
    \draw[-] (B) -- (12.55,0);
    \draw[-] (G) -- (14.35,1);
    \draw[-] (D) -- (13.55,2);

    \node[shape=rectangle,draw=black,fill=white!20!yellow,rotate=45] (h) at (12.5,1.5) {};
    \node[shape=rectangle,draw=black,fill=white!20!yellow,rotate=45] (h) at (13.5,1.5) {};
    \node[shape=rectangle,draw=black,fill=white!20!yellow] (h) at (12.25,2) {};
    \node[shape=rectangle,draw=black,fill=white!20!yellow] (h) at (11.5,1) {};
    \node[shape=rectangle,draw=black,fill=white!20!yellow] (h) at (12.5,1) {};
    \node[shape=rectangle,draw=black,fill=white!20!yellow] (h) at (13.5,1) {};
    \node[shape=rectangle,draw=black,fill=white!20!yellow] (h) at (13.65,0) {};
    \node[shape=rectangle,draw=black,fill=white!20!yellow] (h) at (13,0.55) {}; 
\end{tikzpicture}
\caption{Rewriting a ZX-diagram into a graph-like diagram.}
\label{fig:diagram_to_graph_like}
\end{figure}
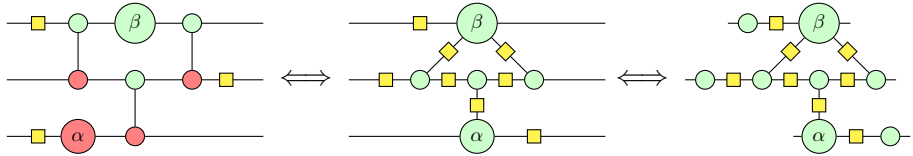

Since we are interested in computing a quantity of shape $\bra{0}C\ket{0}$, the resulting diagram will have no input or output. Thus, in our setting, property $4$ of Definition \ref{def:graph_like} is always met.
We say that such graph-like diagrams are \textit{closed}.


\begin{figure}[h!]
\centering
\begin{tikzpicture}[scale=0.75, transform shape]
    \node[shape=circle,draw=black,fill=white!50!red] (A) at (0,0) {$\alpha$};
    \node[shape=circle,draw=black,fill=white!50!red,minimum size = 0.2cm,inner sep=0pt] (aa) at (-1,0) {};
    \node[shape=circle,draw=black,fill=white!50!red] (B) at (1,0) {};
    \node[shape=circle,draw=black,fill=white!50!red,minimum size = 0.2cm,inner sep=0pt] (bb) at (3,0) {};
    \node[shape=circle,draw=black,fill=white!80!green] (C) at (1,1) {};
    \node[shape=circle,draw=black,fill=white!80!green] (D) at (1,2) {$\beta$};
    \node[shape=circle,draw=black,fill=white!50!red] (E) at (0,1) {};
    \node[shape=circle,draw=black,fill=white!50!red,minimum size = 0.2cm,inner sep=0pt] (ee) at (-1,1) {};
    \node[shape=circle,draw=black,fill=white!80!green] (F) at (0,2) {};
    \node[shape=circle,draw=black,fill=white!50!red,minimum size = 0.2cm,inner sep=0pt] (ff) at (-1,2) {};
    \node[shape=circle,draw=black,fill=white!50!red] (G) at (2,1) {};
    \node[shape=circle,draw=black,fill=white!50!red,minimum size = 0.2cm,inner sep=0pt] (gg) at (3,1) {};
    \node[shape=circle,draw=black,fill=white!80!green] (H) at (2,2) {};
    \node[shape=circle,draw=black,fill=white!50!red,minimum size = 0.2cm,inner sep=0pt] (hh) at (3,2) {};
    \draw[-] (B) -- (C);
    \draw[-] (A) -- (B);
    \draw[-] (F) -- (D);
    \draw[-] (E) -- (F);
    \draw[-] (E) -- (C);
    \draw[-] (G) -- (H);
    \draw[-] (C) -- (G);
    \draw[-] (D) -- (H);
    \draw[-] (A) -- (aa);
    \draw[-] (B) -- (bb);
    \draw[-] (E) -- (ee);
    \draw[-] (F) -- (ff);
    \draw[-] (G) -- (gg);
    \draw[-] (H) -- (hh);
    \node[shape=rectangle,draw=black,fill=white!20!yellow] (h) at (2.5,1) {};
    \node[shape=rectangle,draw=black,fill=white!20!yellow] (h) at (-0.5,2) {};
    \node[shape=rectangle,draw=black,fill=white!20!yellow] (h) at (-0.6,0) {};
    \node[shape=rectangle,draw=white] (eq) at (4,1) {\Large $\Longleftrightarrow$};
    
    \node[shape=circle,draw=black,fill=white!80!green] (B) at (6.2,0) {$\alpha$};
    \node[shape=circle,draw=black,fill=white!80!green] (C) at (6.2,1) {};
    \node[shape=circle,draw=black,fill=white!80!green] (D) at (6.2,2) {$\beta$};
    \node[shape=circle,draw=black,fill=white!80!green] (E) at (5.2,1) {};
    \node[shape=circle,draw=black,fill=white!80!green] (G) at (7.2,1) {};
    \node[shape=circle,draw=black,fill=white!80!green] (F2) at (7.45,2) {};
    \node[shape=circle,draw=black,fill=white!80!green] (E2) at (8.2,1) {};
    \node[shape=circle,draw=black,fill=white!80!green] (B2) at (4.95,0) {};
    \draw[-] (B) -- (C);
    \draw[-] (E) -- (D);
    \draw[-] (E) -- (C);
    \draw[-] (G) -- (D);
    \draw[-] (C) -- (G);
    \draw[-] (D) -- (F2);
    \draw[-] (G) -- (E2);
    \draw[-] (B) -- (B2);
    \node[shape=rectangle,draw=black,fill=white!20!yellow,rotate=45] (h) at (5.7,1.5) {};
    \node[shape=rectangle,draw=black,fill=white!20!yellow,rotate=45] (h) at (6.7,1.5) {};
    \node[shape=rectangle,draw=black,fill=white!20!yellow] (h) at (6.9,2) {};
    \node[shape=rectangle,draw=black,fill=white!20!yellow] (h) at (7.7,1) {};
    \node[shape=rectangle,draw=black,fill=white!20!yellow] (h) at (5.7,1) {};
    \node[shape=rectangle,draw=black,fill=white!20!yellow] (h) at (6.7,1) {};
    \node[shape=rectangle,draw=black,fill=white!20!yellow] (h) at (5.55,0) {};
    \node[shape=rectangle,draw=black,fill=white!20!yellow] (h) at (6.2,0.55) {}; 

    \node[shape=rectangle,draw=white] (eq) at (9.125,1) {\Large $\Longleftrightarrow$};
    
    \coordinate (B) at (11.2,0);
    \coordinate (C) at (11.2,1);
    \coordinate (D) at (11.2,2);
    \coordinate (E) at (10.2,1);
    \coordinate (G) at (12.2,1);
    \coordinate (F2) at (12.45,2);
    \coordinate (E2) at (13.2,1);
    \coordinate (B2) at (9.95,0);
    \draw[-] (B) -- (C);
    \draw[-] (E) -- (D);
    \draw[-] (E) -- (C);
    \draw[-] (G) -- (D);
    \draw[-] (C) -- (G);
    \draw[-] (D) -- (F2);
    \draw[-] (G) -- (E2);
    \draw[-] (B) -- (B2);
    \filldraw [black] (B) circle (2pt);
    \filldraw [black] (C) circle (2pt);
    \filldraw [black] (D) circle (2pt);
    \filldraw [black] (E) circle (2pt);
    \filldraw [black] (G) circle (2pt);
    \filldraw [black] (F2) circle (2pt);
    \filldraw [black] (E2) circle (2pt);
    \filldraw [black] (B2) circle (2pt);
\end{tikzpicture}
\caption{Rewriting a closed ZX-diagram into a closed graph-like.}\label{fig:diagram_to_graph}
\end{figure}
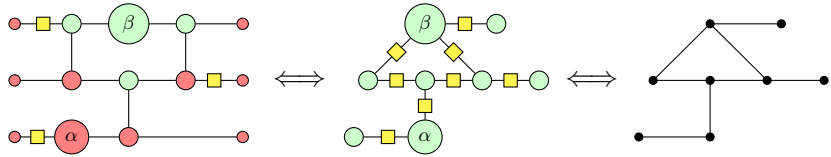

This resulting graph will essentially be our tensor network (we will often call it graph-like diagram with no distinction between ZX-calculus and graph theory). We can now start our simplification process: we will rewrite our graph-like diagram via some ZX-rewriting in order to sparsify it.

\subsection{Rewriting graph-like diagrams}
We now introduce different rewriting rules that we will use in our simplification strategy.

\subsubsection{Local complementations and pivots}
A local complementation is a graph minor introduced by Kotzig \cite{kotzig1968eulerian} (see Figure \ref{fig:loc_comp}):

\begin{definition}[Local complementation]
Let $G$ be a graph and let $u$ be a vertex of $G=(V,E)$.
The \textit{local complementation} of $G$ according to $u$, written  $G\star u$, is a graph which has the same vertices as $G$, but all the neighbours $v_1,v_2,\dots$ of $u$ are connected in $G\star u$ if and only if they are not connected in $G$. All other edges are unchanged.

\end{definition}
\begin{figure}[h!]
\centering
\begin{tikzpicture}[scale=0.9, transform shape]
    \node[shape=circle,draw=black,fill=black] (A) at (0,0) {};
    \node[] (u) at (0,-0.5) {\Large $u$};
    \node[shape=circle,draw=black,fill=black] (B) at (1,0.65) {};
    \node[] (v1) at (1.5,0.65) {\Large $v_1$};
    \node[shape=circle,draw=black,fill=black] (C) at (1.5,-0.6) {};
    \node[] (V2) at (2,-0.6) {\Large $v_2$};
    \node[shape=circle,draw=black,fill=black] (D) at (2.5,-1.5) {};
    \node[] (v3) at (3,-1.5) {\Large $v_3$};
    \draw[-] (A) -- (B);
    \draw[-] (A) -- (C);
    \draw[-] (A) -- (D);
    \draw[-] (B) -- (C);
    \draw[-] (C) -- (D);
    \node[] (eq) at (3.2,0) {\huge $\longrightarrow$};
    \node[] (eq) at (3.2,0.5) {\Large $G\star u$};
    
    \node[shape=circle,draw=black,fill=black] (A) at (5,0) {};
    \node[] (u) at (5,-0.5) {\Large $u$};
    \node[shape=circle,draw=black,fill=black] (B) at (6,0.65) {};
    \node[] (v1) at (6.5,0.65) {\Large $v_1$};
    \node[shape=circle,draw=black,fill=black] (C) at (6.5,-0.6) {};
    \node[] (v2) at (7.1,-0.5) {\Large $v_2$};
    \node[shape=circle,draw=black,fill=black] (D) at (7.5,-1.5) {};
    \node[] (v3) at (8,-1.5) {\Large $v_3$};
    \draw[-] (A) -- (B);
    \draw[-] (A) -- (C);
    \draw[-] (A) -- (D);
    \draw[-] (B) -- (D);
    \draw[dashed,color=white!80!blue] (B) -- (C);
    \draw[dashed,color=white!80!blue] (C) -- (D);
    
\end{tikzpicture}
\caption{Local complementation of a graph $G$ according to $u$. $v_1$ and $v_3$ are connected in $G\star u$ since the edge $v_1v_3\notin E$, while $v_1v_2\in E$ and $v_2v_3\in E$.}
\label{fig:loc_comp}
\end{figure}
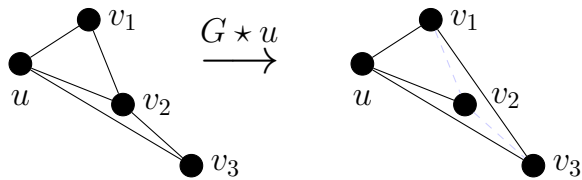

It so happens that we can perform this graph transformation on a particular set of graph-like diagrams called \textit{graph states}. Graph states are graph-like diagrams where all spider angles are $0$ and all spiders have an output (i.e. a dangling edge).
Given some graph state $G$ and one of its spider $u$, one can perform a local complementation $G\star u$ and fix the graph state by:
\begin{itemize}
    \item adding a red spider of angle $-\pi/2$ to the output of $u$
    \item adding green spiders of angle $\pi/2$ to the outputs of the neighbors of $u$
\end{itemize}
Figure \ref{fig:loc_comp_graph_state} depicts a graph-state local complementation.
This definition is standard in the ZX-calculus, see \cite{duncan2020graph} for a thorough proof.

\begin{figure}[h!]
\centering
\begin{tikzpicture}[scale=0.9, transform shape]
    \node[shape=circle,draw=black,fill=white!80!green] (A) at (0,0) {};
    \node[shape=rectangle,draw=white] (u) at (0,-0.5) {\Large $u$};
    \node[shape=circle,draw=black,fill=white!80!green] (B) at (1,0.65) {};
    \node[shape=circle,draw=black,fill=white!80!green] (C) at (1.5,-0.6) {};
    \node[shape=circle,draw=black,fill=white!80!green] (D) at (2.5,-1.5) {};
    \coordinate (A2) at (0,1.1);
    \coordinate (B2) at (1,1.7);
    \coordinate (C2) at (1.5,1.35);
    \coordinate (D2) at (2.5,1.1);
    \draw[-,color=white!30!blue] (A) -- (B);
    \draw[-,color=white!30!blue] (A) -- (C);
    \draw[-,color=white!30!blue] (A) -- (D);
    \draw[-,color=white!30!blue] (B) -- (C);
    \draw[-,color=white!30!blue] (C) -- (D);
    \draw[-] (A) -- (A2);
    \draw[-] (B) -- (B2);
    \draw[-] (C) -- (C2);
    \draw[-] (D) -- (D2);
    \draw[loosely dotted] (1.7,0.15) -- (2.3,0.15);
    \node[] (eq) at (3.6,0) {\huge $\longrightarrow$};
    \node[] (eq) at (3.6,0.5) {\Large $G\star u$};
    
    \node[shape=circle,draw=black,fill=white!80!green] (A) at (5,0) {};
    \node[shape=rectangle,draw=white] (u) at (5,-0.5) {\Large $u$};
    \node[shape=circle,draw=black,fill=white!80!green] (B) at (6,0.65) {};
    \node[shape=circle,draw=black,fill=white!80!green] (C) at (6.5,-0.6) {};
    \node[shape=circle,draw=black,fill=white!80!green] (D) at (7.5,-1.5) {};
    \coordinate (A2) at (5,1.1);
    \coordinate (B2) at (6,1.7);
    \coordinate (C2) at (6.5,1.35);
    \coordinate (D2) at (7.5,1.1);
    \draw[-,color=white!30!blue] (A) -- (B);
    \draw[-,color=white!30!blue] (A) -- (C);
    \draw[-,color=white!30!blue] (A) -- (D);
    \draw[-,color=white!30!blue] (B) -- (D);
    \draw[dashed,color=white!80!blue] (B) -- (C);
    \draw[dashed,color=white!80!blue] (C) -- (D);
    \draw[-] (A) -- (A2);
    \draw[-] (B) -- (B2);
    \draw[-] (C) -- (C2);
    \draw[-] (D) -- (D2);
    \draw[loosely dotted] (6.7,0.15) -- (7.3,0.15);
    \node[shape=circle,draw=black,fill=white!50!red,minimum size=5mm] (A) at (5,0.6) {};
    \node[shape=rectangle] (a) at (5,0.6) {\mbox{-}$\frac{\pi}{2}$};
    \node[shape=circle,draw=black,fill=white!80!green,minimum size=5mm] (B) at (6,1.25) {};
    \node[shape=rectangle] (b) at (6,1.25) {$\frac{\pi}{2}$};
    \node[shape=circle,draw=black,fill=white!80!green,minimum size=5mm] (C) at (6.5,0.85) {};
    \node[shape=rectangle] (c) at (6.5,0.85) {$\frac{\pi}{2}$};
    \node[shape=circle,draw=black,fill=white!80!green,minimum size=5mm] (D) at (7.5,0.6) {};
    \node[shape=rectangle] (d) at (7.5,0.6) {$\frac{\pi}{2}$};
    
\end{tikzpicture}
\caption{ZX-diagram representation of a local complementation on a graph-state $G$ according to $u$.}
\label{fig:loc_comp_graph_state}
\end{figure}
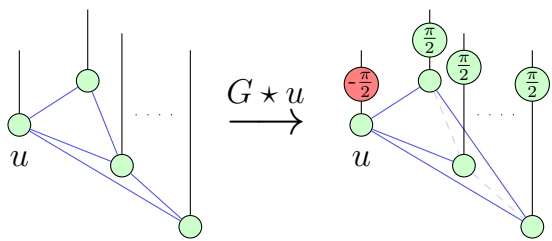

However, we want to apply local complementations to arbitrary closed graph-like diagrams (i.e. graph states with no outputs and non-zero angles).
We propose the following local complementation procedure for closed graph-like diagrams (see Figure \ref{fig:loc_comp_graph_like}):
\begin{itemize}
    \item start by pulling the angles out of the spiders via the unfusion rule (\ref{rule:unfusion}),
    \item apply a graph state local complementation on the resulting graph state 
\end{itemize}

Notice that the resulting diagram is not graph-like, as it contains an additional red spider and non-Hadamard edges.
Notice also that applying many such transformations to a graph-like will ``grow" chains of degree $2$ spiders in between the degree $1$ $Z$-spiders carrying the angles and the spiders of the graph state. However, up to those chains, the resulting diagram has the same structure as its graph state portion.

Those chains can be seen as tensor networks implementing simple linear forms (or vectors) of dimension $2$. It is simple to update their value every time a local complementation is performed on the graph-like diagram:
\begin{itemize}
    \item their value can be initially set to $\left[1, e^{i\theta}\right]$ where $\theta$ is the angle of initial spider
    \item then, for every local complementation on some spider $u$:
    \begin{itemize}
        \item update the linear form $L_u$ of $u$ via: $$L_u\gets R_X(-\pi/2)L_u$$
        \item and for each neighbor $v\in N(u)$, update its linear form $L_v$ via: $$L_v \gets R_Z(\pi/2)L_v$$
    \end{itemize}
\end{itemize}
The resulting data structure is somewhat hybrid, in the sense that it describes a graph state along with some additional linear forms attached to each node of the graph state. Hence it is not a ZX-diagram per se.

Notice that this bookkeeping is overall cheaper than the local complementation itself: we simply need to update $|N(u)|+1$ constant size linear forms, while we need to modify $|N(u)|^2$ adjacency relations to update the graph state.


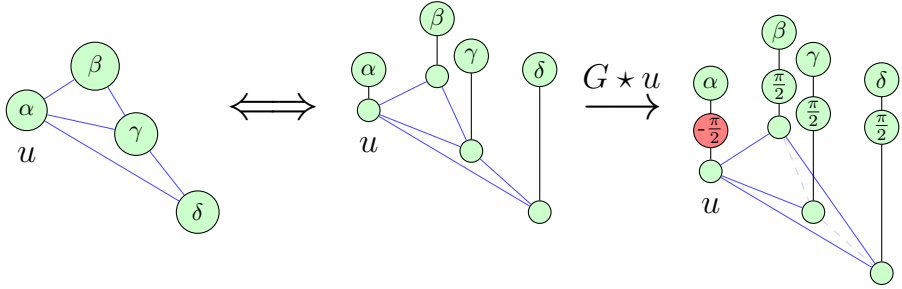
\begin{figure}[h!]
\centering
\begin{tikzpicture}[scale=0.9, transform shape]
    \node[shape=circle,draw=black,fill=white!80!green] (A) at (0,0) {$\alpha$};
    \node[shape=rectangle,draw=white] (u) at (0,-0.65) {\Large $u$};
    \node[shape=circle,draw=black,fill=white!80!green] (B) at (1,0.65) {$\beta$};
    \node[shape=circle,draw=black,fill=white!80!green] (C) at (1.6,-0.35) {$\gamma$};
    \node[shape=circle,draw=black,fill=white!80!green] (D) at (2.5,-1.5) {$\delta$};
    \coordinate (A2) at (0,1.1);
    \coordinate (B2) at (1,1.7);
    \coordinate (C2) at (1.5,1.35);
    \coordinate (D2) at (2.5,1.1);
    \draw[-,color=white!30!blue] (A) -- (B);
    \draw[-,color=white!30!blue] (A) -- (C);
    \draw[-,color=white!30!blue] (A) -- (D);
    \draw[-,color=white!30!blue] (B) -- (C);
    \draw[-,color=white!30!blue] (C) -- (D);
    \node[shape=rectangle,draw=white] (eq) at (3.6,0) {\huge $\Longleftrightarrow$};
    
    \node[shape=circle,draw=black,fill=white!80!green] (A) at (5,0) {};
    \node[shape=rectangle,draw=white] (u) at (5,-0.5) {\Large $u$};
    \node[shape=circle,draw=black,fill=white!80!green] (B) at (6,0.5) {};
    \node[shape=circle,draw=black,fill=white!80!green] (C) at (6.5,-0.6) {};
    \node[shape=circle,draw=black,fill=white!80!green] (D) at (7.5,-1.5) {};
    \coordinate (A1) at (5,0.6);
    \coordinate (B1) at (6,1.35);
    \coordinate (C1) at (6.5,0.8);
    \coordinate (D1) at (7.5,0.6);
    \coordinate (A2) at (5,1.2);
    \coordinate (B2) at (6,1.95);
    \coordinate (C2) at (6.5,1.4);
    \coordinate (D2) at (7.5,1.2);
    \draw[-,color=white!30!blue] (A) -- (B);
    \draw[-,color=white!30!blue] (A) -- (C);
    \draw[-,color=white!30!blue] (A) -- (D);
    \draw[-,color=white!30!blue] (B) -- (C);
    \draw[-,color=white!30!blue] (C) -- (D);
    \draw[-] (A) -- (A1);
    \draw[-] (B) -- (B1);
    \draw[-] (C) -- (C1);
    \draw[-] (D) -- (D1);
    \node[shape=circle,draw=black,fill=white!80!green,minimum size=5mm] (A) at (5,0.6) {};
    \node[shape=rectangle] (a) at (5,0.6) {$\alpha$};
    \node[shape=circle,draw=black,fill=white!80!green,minimum size=5mm] (B) at (6,1.35) {};
    \node[shape=rectangle] (b) at (6,1.35) {$\beta$};
    \node[shape=circle,draw=black,fill=white!80!green,minimum size=5mm] (C) at (6.5,0.8) {};
    \node[shape=rectangle] (c) at (6.5,0.8) {$\gamma$};
    \node[shape=circle,draw=black,fill=white!80!green,minimum size=5mm] (D) at (7.5,0.6) {};
    \node[shape=rectangle] (d) at (7.5,0.6) {$\delta$};
    \node[] (eq) at (8.7,0) {\huge $\longrightarrow$};
    \node[] (eq) at (8.7,0.5) {\Large $G\star u$};
    
    \node[shape=circle,draw=black,fill=white!80!green] (A) at (10,-0.9) {};
    \node[shape=rectangle,draw=white] (u) at (10,-1.4) {\Large $u$};
    \node[shape=circle,draw=black,fill=white!80!green] (B) at (11,-0.25) {};
    \node[shape=circle,draw=black,fill=white!80!green] (C) at (11.5,-1.5) {};
    \node[shape=circle,draw=black,fill=white!80!green] (D) at (12.5,-2.4) {};
    \coordinate (A1) at (10,0.4);
    \coordinate (B1) at (11,1.15);
    \coordinate (C1) at (11.5,0.75);
    \coordinate (D1) at (12.5,0.45);
    \coordinate (A2) at (10,0.95);
    \coordinate (B2) at (11,1.75);
    \coordinate (C2) at (11.5,1.35);
    \coordinate (D2) at (12.5,1.05);
    \draw[-,color=white!30!blue] (A) -- (B);
    \draw[-,color=white!30!blue] (A) -- (C);
    \draw[-,color=white!30!blue] (A) -- (D);
    \draw[-,color=white!30!blue] (B) -- (D);
    \draw[dashed,color=white!80!blue] (B) -- (C);
    \draw[dashed,color=white!80!blue] (C) -- (D);
    \draw[-] (A) -- (A1);
    \draw[-] (B) -- (B1);
    \draw[-] (C) -- (C1);
    \draw[-] (D) -- (D1);
    \node[shape=circle,draw=black,fill=white!80!green,minimum size=5mm] (A) at (10,0.4) {};
    \node[shape=rectangle] (a) at (10,0.4) {$\alpha$};
    \node[shape=circle,draw=black,fill=white!80!green,minimum size=5mm] (B) at (11,1.15) {};
    \node[shape=rectangle] (b) at (11,1.15) {$\beta$};
    \node[shape=circle,draw=black,fill=white!80!green,minimum size=5mm] (C) at (11.5,0.75) {};
    \node[shape=rectangle] (c) at (11.5,0.75) {$\gamma$};
    \node[shape=circle,draw=black,fill=white!80!green,minimum size=5mm] (D) at (12.5,0.45) {};
    \node[shape=rectangle] (d) at (12.5,0.45) {$\delta$};
    \node[shape=circle,draw=black,fill=white!50!red,minimum size=5mm] (A) at (10,-0.3) {};
    \node[shape=rectangle] (a) at (10,-0.3) {\mbox{-}$\frac{\pi}{2}$};
    \node[shape=circle,draw=black,fill=white!80!green,minimum size=5mm] (B) at (11,0.35) {};
    \node[shape=rectangle] (b) at (11,0.35) {$\frac{\pi}{2}$};
    \node[shape=circle,draw=black,fill=white!80!green,minimum size=5mm] (C) at (11.5,-0.05) {};
    \node[shape=rectangle] (c) at (11.5,-0.05) {$\frac{\pi}{2}$};
    \node[shape=circle,draw=black,fill=white!80!green,minimum size=5mm] (D) at (12.5,-0.25) {};
    \node[shape=rectangle] (d) at (12.5,-0.25) {$\frac{\pi}{2}$};
    
\end{tikzpicture}
\caption{ZX-diagram of local complementation on a closed graph-like $G$ according to $u$, after pulling the angles out of the spiders.}
\label{fig:loc_comp_graph_like}
\end{figure}

\tikzstyle{gn}=[rectangle,rounded corners=0.8em,fill=zxgreen,draw=Black,
  line width=0.8 pt,inner sep=3pt,minimum width=1.5em,minimum height=1.5em]
\tikzstyle{rn}=[rectangle,rounded corners=0.8em,fill=zxred,draw=Black,
  line width=0.8 pt,inner sep=3pt,minimum width=1.5em,minimum height=1.5em]
\definecolor{zxred}{RGB}{232, 165, 165}
\definecolor{zxgreen}{RGB}{216, 248, 216}
\tikzstyle{nm}=[draw, circle,fill=black, inner sep=1pt]

We will use the following local transformation based on local complementation:

\begin{definition}[Pivot]
Let $G$ be a graph and let $u$ and $v$ be a pair of connected vertices in $G$.
The \textit{pivot} of $G$ along the edge $uv$ is the graph $G\wedge uv := G\star u\star v\star u$.
\end{definition}

\begin{figure}[h!]
\centering
\begin{tikzpicture}
    \draw (0.25, 1) node[nm](u){};
    \draw (1.75, 1) node[nm](v){};
    \draw (0, 0) node[nm](b){};
    \draw (1, 0.5) node[nm](a){};
    \draw (2, 0) node[nm](c){};
    \node[shape=rectangle,draw=white] (U) at (-0.1,1) {\large $u$};
    \node[shape=rectangle,draw=white] (V) at (2.1,1) {\large $v$};
    \draw (u) -- (v);
    \draw (u) -- (b);
    \draw (u) -- (a);
    \draw (v) -- (a);
    \draw (v) -- (c);
    \draw (b) -- (c);
    \draw (0,-0.1) -- (2,-0.1);
    \draw (b) -- (1,0.4);
    \draw (0,0.1) -- (a);
    \draw (c) -- (1,0.4);
    \draw (2,0.1) -- (a);
    \node[shape=circle,draw=black,fill=white] (A) at (1,0.5) {$A$};
    \node[shape=circle,draw=black,fill=white] (B) at (0,0) {$B$};
    \node[shape=circle,draw=black,fill=white] (C) at (2,0) {$C$};

    \node[] (eq) at (3,0.3) {\Large $\longrightarrow$};
    \node[] (eq) at (3,0.7) {$G\wedge uv$};

    \draw (4.25, 1) node[nm](u){};
    \draw (5.75, 1) node[nm](v){};
    \draw (4, 0) node[nm](b){};
    \draw (5, 0.5) node[nm](a){};
    \draw (6, 0) node[nm](c){};
    \node[shape=rectangle,draw=white] (U) at (3.9,1) {\large $v$};
    \node[shape=rectangle,draw=white] (V) at (6.1,1) {\large $u$};
    \draw (u) -- (v);
    \draw (u) -- (b);
    \draw (u) -- (a);
    \draw (v) -- (a);
    \draw (v) -- (c);
    
    \draw (4,-0.125) -- (6, 0.025);
    \draw (4, 0.025) -- (6,-0.125);

    \draw (4,0.2) -- (5,0.3);
    \draw (4, -0.1) -- (5, 0.6);
    
    \draw (6,0.2) -- (5,0.3);
    \draw (6, -0.1) -- (5, 0.6);
    \node[shape=circle,draw=black,fill=white] (A) at (5,0.5) {$A$};
    \node[shape=circle,draw=black,fill=white] (B) at (4,0) {$B$};
    \node[shape=circle,draw=black,fill=white] (C) at (6,0) {$C$};

    \node[shape=rectangle,draw=white] at (3,-1) {where $A = N_G(u)\cap N_G(v)$, $B = N_G(u)$ and $C = N_G(v)$.};
\end{tikzpicture}
\caption{Pivot of $G$ along the edge $uv$: toggles edges between the common neighborhood of $u$ and $v$, the exclusive neighborhood of $u$ and the exclusive neighborhood of $v$, additionally swaps $u$ and $v$.}
\label{fig:pivot}
\end{figure}

\subsubsection{Spider unfusion}

For a vertex $u$ in $G$, using the \textit{unfusion rule} introduced in Section \ref{rules} on $u$ produces a new node $u'$. However, it does not preserve the graph-likeness of a ZX-diagram because of the non-Hadamard edge it creates.

\begin{definition}[Graph-like unfusion rule]
Let $G$ be a graph-like diagram and let $u$ be a spider of $G$. After unfusioning $u$ into two spiders $u, u'$, insert a new trivial X-spider $u''$ on the wire $uu'$, then toggle it to a Z-spider thus creating Hadamard edges.
\end{definition}

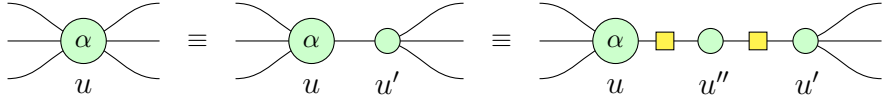
\begin{figure}[h!]
\centering
\begin{tikzpicture}
    
    \draw (0, 0) node[nm](n0){};
    \node[shape=rectangle,draw=white] (u) at (0,-0.6) {\large $u$};
    \draw (-1, 0.5) to[out=0,in=160] (n0);
    \draw (-1, 0) -- (n0);
    \draw (-1, -0.5) to[out=0,in=200] (n0);
    \draw (1, 0.5) to[in=20,out=180] (n0);
    \draw (1, 0) -- (n0);
    \draw (1, -0.5) to[in=-20,out=180] (n0);
    \node[shape=circle,draw=black,fill=white!80!green] (A) at (0,0) {$\alpha$};
    
    \draw (1.5, 0) node{$\equiv$};
    \draw (3, 0) node[nm](n1){};
    \node[shape=rectangle,draw=white] (u) at (3,-0.6) {\large $u$};
    \draw (2, 0.5) to[out=0,in=160] (n1);
    \draw (2, 0) -- (n1);
    \draw (2, -0.5) to[out=0,in=200] (n1);
    \draw (4, 0) node[nm](n3){};
    \node[shape=rectangle,draw=white] (u) at (4,-0.55) {\large $u'$};
    \draw (n1) -- (n3);
    \draw (5, 0.5) to[in=20,out=180] (n3);
    \draw (5, 0) -- (n3);
    \draw (5, -0.5) to[in=-20,out=180] (n3);
    \node[shape=circle,draw=black,fill=white!80!green] (A) at (3,0) {$\alpha$};
    \node[shape=circle,draw=black,fill=white!80!green] (A) at (4,0) {};

    \draw (5.5, 0) node{$\equiv$};
    \draw (7, 0) node[nm](n1){};
    \node[shape=rectangle,draw=white] (u) at (7,-0.6) {\large $u$};
    \draw (6, 0.5) to[out=0,in=160] (n1);
    \draw (6, 0) -- (n1);
    \draw (6, -0.5) to[out=0,in=200] (n1);
    \draw (8.25, 0) node[nm](n3){};
    \node[shape=rectangle,draw=white] (u) at (8.3,-0.55) {\large $u''$};
    \draw (9.5, 0) node[nm](n2){};
    \node[shape=rectangle,draw=white] (u) at (9.525,-0.55) {\large $u'$};
    \draw (n1) -- (n3);
    \draw (n2) -- (n3);
    \draw (10.5, 0.5) to[in=20,out=180] (n2);
    \draw (10.5, 0) -- (n2);
    \draw (10.5, -0.5) to[in=-20,out=180] (n2);
    \node[shape=rectangle,draw=black,fill=white!20!yellow] (h) at (7.65,0) {};
    \node[shape=rectangle,draw=black,fill=white!20!yellow] (h) at (8.875,0) {};
    \node[shape=circle,draw=black,fill=white!80!green] (A) at (7,0) {$\alpha$};
    \node[shape=circle,draw=black,fill=white!80!green] (A) at (8.25,0) {};
    \node[shape=circle,draw=black,fill=white!80!green] (A) at (9.5,0) {};
    
\end{tikzpicture}
\caption{Graph-like unfusion rule.}
\label{fig:unfusion}
\end{figure}

Notice that since each node in our graph-like carries a linear form attach to it, we need to decide what it becomes after the splitting. In the diagram of Figure \ref{fig:unfusion}, we assume that the linear form is attached to the node carrying the angle (here $u$). In practice, this means that the two new nodes $u'$ and $u''$ will receive a trivial linear form $\sqrt{2}\cdot\bra{+}=(1\ 1)$ and $u$'s original linear form will remain attached to $u$.

\begin{lemma}
This unfusion rule preserves the graph-likeness of a ZX-diagram.
\end{lemma}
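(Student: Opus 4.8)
The claim is that the graph-like unfusion rule of the preceding definition yields a diagram that again satisfies all four conditions of Definition \ref{def:graph_like}. The plan is to simply check the four conditions one at a time against the construction, using the already-established soundness of the underlying ZX-rules (spider unfusion, the colour-change Rule~1, and Hadamard cancellation Rule~2) so that no separate correctness argument is needed.

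First I would recall what the construction does: starting from a $Z$-spider $u$ with angle $\alpha$, we unfuse it into $u$ (keeping $\alpha$) and $u'$ (angle $0$) joined by an ordinary (non-Hadamard) wire; we then place a trivial $X$-spider $u''$ on that wire, splitting it into two ordinary wires $uu''$ and $u''u'$; finally we toggle $u''$ to a $Z$-spider via Rule~1, which converts both incident ordinary edges into Hadamard edges. So after the procedure the three new local edges are $u\!-\!u''$ and $u''\!-\!u'$, both Hadamard, and the external wires that were on $u$ are redistributed between $u$ and $u'$ without changing their edge type (they were Hadamard, and remain Hadamard). Condition~1 (all spiders are $Z$): the only new spiders are $u'$ and $u''$, both $Z$ after the toggle; $u$ is unchanged. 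Condition~2 (all edges are Hadamard): the new edges $uu''$, $u''u'$ are Hadamard by the toggle step, and every pre-existing edge is untouched, so it holds. Condition~4 is about inputs/outputs: $u$ retains whatever (at most one) boundary wire it had, while $u'$ and $u''$ are internal and carry no boundary; and if $u$ originally had a boundary wire we may simply keep that wire on $u$, so no spider ends up with more than one boundary connection.

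The one condition requiring a genuine argument is Condition~3: no parallel Hadamard edges and no self-loops. A self-loop cannot appear because $u''$ has exactly degree two ($u$ and $u'$, which are distinct fresh vertices), and $u,u'$ are not directly joined. Parallel edges are the real point: if $u$ and $u'$ end up sharing a common neighbour $w$ (because the external wires of the old $u$ were split so that $w$ is attached to both $u$ and $u'$), that is fine — they are joined through distinct vertices, not by parallel edges between the same pair. But a parallel Hadamard edge could be created between $u$ and $u''$, or $u''$ and $u'$, only if one of the redistributed external wires of $u$ happened to go to $u''$ or to $u'$ — which it cannot, since $u''$ is brand new and $u'$ only receives wires that were split off from $u$'s external neighbourhood, never a wire back to $u$ itself. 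Hence the only way graph-likeness could fail is if a parallel Hadamard edge is created between $u$ and one of its \emph{old} neighbours that is also reassigned to $u'$ — but that again produces two edges to \emph{different} endpoints. So the main (and only mildly subtle) obstacle is this parallel-edge bookkeeping, and it is dispatched by observing that every newly created edge is incident to the fresh vertex $u''$, whose neighbours $\{u,u'\}$ are distinct and previously non-adjacent, so no multiplicity is introduced. I would close by noting that the attached linear-form bookkeeping described after the definition does not affect any of the four structural conditions, since linear forms are not part of the graph-likeness criterion.
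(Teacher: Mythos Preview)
Your argument is correct and, in fact, addresses the stated lemma more directly than the paper's own proof. You verify the four structural conditions of Definition~\ref{def:graph_like} for the output diagram, which is exactly what ``preserves graph-likeness'' demands. The paper takes a different tack: it runs the construction backwards---remove $u''$ via the identity rule (Rule~4), cancel the two resulting Hadamards (Rule~2), then fuse $u$ and $u'$ (Rule~3)---thereby recovering the original spider. That argument really establishes \emph{soundness} (the graph-like unfusion is a semantic no-op), with graph-likeness of the output treated as evident by inspection of the construction. What your approach buys is an explicit check that no parallel Hadamard edges or self-loops are introduced; what the paper's one-liner buys is the semantic equivalence, which the lemma as worded does not strictly require but on which the whole pre-processing pipeline depends. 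One minor remark: your Condition~3 discussion is more elaborate than necessary, since the external legs of $u$ are \emph{partitioned} between $u$ and $u'$ (the input is graph-like, so $u$ has at most one edge to each neighbour $w$, and that single edge is assigned to exactly one of $u,u'$); hence the ``shared neighbour'' scenario you analyse cannot arise, though your conclusion is of course unaffected.
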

\begin{proof}
    Use the identity rule to remove $u''$, followed by Hadamard cancellation and spider fusion.
\end{proof}

Our strategy uses the graph-like unfusion rule to \textit{split} the graph - i.e. to recursively unfusion all high degrees nodes of the graph until all spiders created have a degree less than or equal to 3. 
However, when performing an unfusion, we need to first partition the legs of $u$ into two sets: one set will remain attached to $u$, while the other will be attached to $u'$. We use a in-house heuristic based on computing a cycle basis of $G$ around $u$ (see Section \ref{appendix} for more details) in order to build this partition.

\section{Simplification strategies}
\label{strat}
\subsection{Our strategy}
Our simplification strategy consists in performing random pivots on a graph-like diagram and giving the resulting diagram a score correlated to its contraction cost. We used this strategy in a simulated annealing setting [KIR82] with parameters described in Subsection \ref{param}. Upon reaching a temperature of 0, this algorithm is equivalent to a greedy algorithm. We tested both modes of operation in our benchmarks.

\smallskip

Thus, our method consists of the following steps:
\begin{itemize}
    \item translate the input quantum circuit and the projector $\bra{x}$ into a ZX-diagram,
    \item transform the ZX-diagram into a graph-like diagram,
    \item optimize the graph-like diagram using ZX rewriting rules and pivots that lower the treewidth approximation given by QuickBB, then split all the high-degree nodes,
    \item pre-contract low degree nodes into their neighbors as explained in Subsection \ref{order},
    \item carry on with the state of the art method in order to compute a contraction order as described in Subsection \ref{order},
    \item contract the network.
\end{itemize}

\begin{algorithm}
\caption{Simulated annealing for graph-like optimization}
\label{algo2}
\begin{algorithmic}
\Require{graph-like diagram, number of steps, cost function} 
\Ensure{optimized graph-like diagram}
\Statex
\Function{SimulatedAnnealing}{$graphlike$, $nb\_steps$, $\textsc{CostFunction}$}
  \State {solution$_{current}\gets graphlike$}
  \State {energy$_{current}\gets$ \Call{CostFunction}{solution$_{current}$}}
  \State {energy$_{optimal}\gets$ energy$_{current}$} \Comment{Initialization}
  \State {$step\gets 0$}
  \While{$step < nb\_steps$}
    \State { $\triangleright$ Two graph-like are neighbors if they differ by one pivot}
    \State{solution$_{candidate}\gets$\Call{RandomPivot}{solution$_{current}$}} 
    \State {energy$_{candidate}\gets$ \Call{CostFunction}{solution$_{candidate}$}}
    \If{energy$_{candidate} <$ energy$_{optimal}$}
      \State{solution$_{optimal}\gets$ solution$_{candidate}$}
      \State{energy$_{optimal}\gets$ energy$_{candidate}$}
    \EndIf
    \Statex
    \State {$\tau\gets$\Call{Temperature}{$step / nb\_steps$}}
    \State {$prob\gets$\Call{Accept}{energy$_{current}$, energy$_{candidate}$, $\tau$}}
    \Statex
    \If{$prob >$\Call{Random}{$0$,$1$}}\Comment{Uniformly picks a random float between 0 and 1}
      \State{solution$_{current}\gets$ solution$_{candidate}$}
      \State{energy$_{current}\gets$ energy$_{candidate}$}
    \EndIf
  \EndWhile
  \State \Return{solution$_{optimal}$}
  \State { $\triangleright$ Definitions of $\textsc{CostFunction}$, $\textsc{Temperature}$ and $\textsc{Accept}$ in Section \ref{appendix}}
\EndFunction
\end{algorithmic}
\end{algorithm}

\subsection{Optimization parameters}
\label{param}
Our contraction order finder uses the NetworkX \cite{SciPyProceedings_11} implementation of the Louvain community detection algorithm, and an implementation of the QuickBB algorithm by Stephan Oepen \cite{mtool}. We ran QuickBB greedily on the communities and the metagraph assuming their small size in terms of nodes and edges would ensure the first tree decomposition found to already be a satisfying approximation. This allowed our order finder to return a contraction order and its associated contraction cost in a few seconds (for depth 20 Sycamore circuits). Thus to better the quality of the results, we run many order finding trials in parallel with different communities partitioning and return only the cheapest order.

For our pre-processing optimization strategy, we implemented a simulated annealing algorithm. We considered two graph-like diagrams to be neighbors if they differed by a single pivot. For about 100 steps starting with the original closed graph-like diagram we take one of its neighbor uniformly at random and choose, based on a global temperature and cost functions, whether or not to make it the current starting point for the next iteration (see Algorithm \ref{algo2}). We tried different ``proxy'' cost functions to guide the simulated annealing. We used treewidth-based heuristics such as min fill-in heuristics, QuickBB, mock contraction methods like $quick\_tw$, as well as an estimate for the contraction cost in FLOPs. We found out that pairing $quick\_tw$ and QuickBB to estimate $tw(G^\ast)$ yields the best results.

We find that this naive approach for finding equivalent tensor networks allows for significantly better contraction costs than existing state of the art methods.

\section{Benchmarks and Results}
\label{res}
In 2019, Arute \textit{et al.} \cite{arute2019quantum} sampled circuits on their 53-qubit Sycamore quantum chip in 200 seconds asserting quantum supremacy, as they initially estimated this task would require Summit, the world’s most powerful supercomputer today, approximately 10,000 years. Huang \textit{et al.} \cite{huang2020classical} since lowered the sampling task classical runtime to about 20 days using their tensor-network-based simulation engine AC-QDP.

In order to compare our approach to Huang et al. methods \cite{huang2020classical}, we generated 5 random Sycamore circuits for depths $\{12,14,16,18,20\}$ and performed the sampling task with AC-QDP. 
We then converted the same circuits into graph-like diagrams with and without applying our pre-processing before running AC-QDP. 
We averaged the results of the 5 circuits from each method, for each depth.
Figure \ref{tw} demonstrates the efficiency of our treewidth reduction strategy on the line-graph. The \textit{Standard} plot shows the value of $tw(G^{\ast})$ for the initial tensor network that would be used by AC-QDP, the \textit{ZX Unoptimized} plot shows our method without using further rewriting rules and local complementation strategies, and the \textit{ZX Optimized} plot shows the final results of our pre-processing.

\begin{figure}[!htb]
\centering
\includegraphics[scale=0.7]{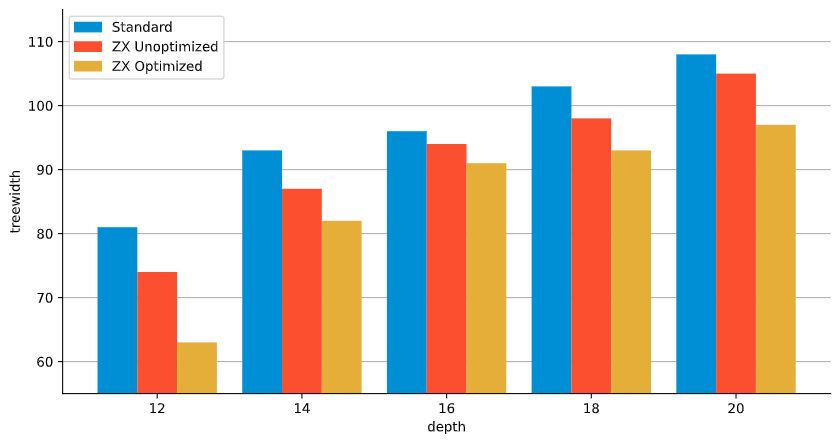}
\caption{Evolution of $tw(G^\ast)$ (min fill-in heuristic).}
\label{tw}
\end{figure}

Our pre-processing and order finding methods prove to be significantly better than the current state of the art method before slicing. Figure \ref{real} shows the number of FLOPs required for contracting the tensor network associated to the random Sycamore circuit. The AC-QDP data plotted here is taken directly from Huang et al. \cite{huang2020classical}. The \textit{simulated annealing} plot shows the result of our simplification strategy, the \textit{greedy algorithm} plot shows the behavior of the simulated annealing algorithm when the temperature is 0. The \textit{peak performance} plot displays our best results among all the circuits tested for each depth.

\begin{figure}[h!]
\centering
\includegraphics[scale=0.45]{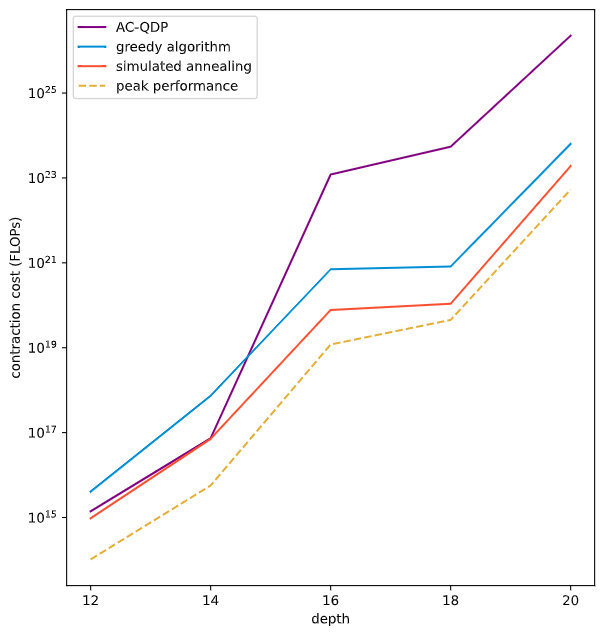}
\caption{Contraction cost before slicing. \textit{AC-QDP} data extrapolated from their ``Contraction cost and FLOPS efficiency for one sample" multiplied by their ``\textbf{\#} Sub-tasks".}
\label{real}
\end{figure}

\begin{table}[h!]
\begin{center}
\begin{tabular}{|c || c c c c c|} 
 \hline
 Depth & 12 & 14 & 16 & 18 & 20 \\ [0.5ex] 
 \hline\hline
 AC-QDP & $1.40\times10^{15}$ & $7.33\times10^{16}$ & $1.21\times10^{23}$ & $5.43\times10^{23}$ & $2.23\times10^{26}$\\ [0.5ex] 
 \hline
 Our method (average) & $9.61\times10^{14}$ & $7.15\times10^{16}$ & $7.75\times10^{19}$ & $1.09\times10^{20}$ & $1.89\times10^{23}$\\ [0.5ex]
 \hline
 Gain & $1.45 \times$ & $1.03 \times$ & $1561.3\times$ & $5122.6\times$ & $1179.9\times$\\ [0.5ex] 
\hline
Our method (peak) & $1.03\times10^{14}$ & $5.68\times10^{15}$ & $1.19\times10^{19}$ & $4.52\times10^{19}$ & $5.32\times10^{22}$\\ [0.5ex]
 \hline
 Gain & $13.5\times$ & $12.9\times$ & $10168.1\times$ & $12013.3\times$ & $4200.6\times$\\ [0.5ex] 
\hline
\end{tabular}
\caption{Average and peak contraction cost or number of FLOPs required for the contraction of random Sycamore circuits before the slicing step, Gain = AC-QDP / Our method.}
\end{center}
\end{table}
\vspace{1cm}

Running our order finder in parallel to improve efficiency, it yields competitive results in minutes : on the largest depth 20 Sycamore circuits, we find that our method reduces the number of FLOPs by a factor of about 1180 on average compared the AC-QDP's method, possibly turning a \textit{year-scale} computation into a \textit{day-scale} computation.

\section{Discussion}
\label{disc}
We now present the results of the actual contraction of our optimized tensor networks against the current state of the art method. All the computations were performed using the same hardware and the same random Sycamore circuits. The plot showing AC-QDP's performance has been computed using our own GPUs for fair comparison.

To measure the efficiency of our method, we ran our ZX-based optimization algorithms on the random Sycamore circuits as a pre-processing step before running AC-QDP on our hardware (NVIDIA Tesla V100S). We could not replicate their depth 20 Sycamore simulation result because the task we sampled did not terminate in 24 hours.

\begin{figure}[!htb]
\hspace{-25mm}
\includegraphics[scale=0.75]{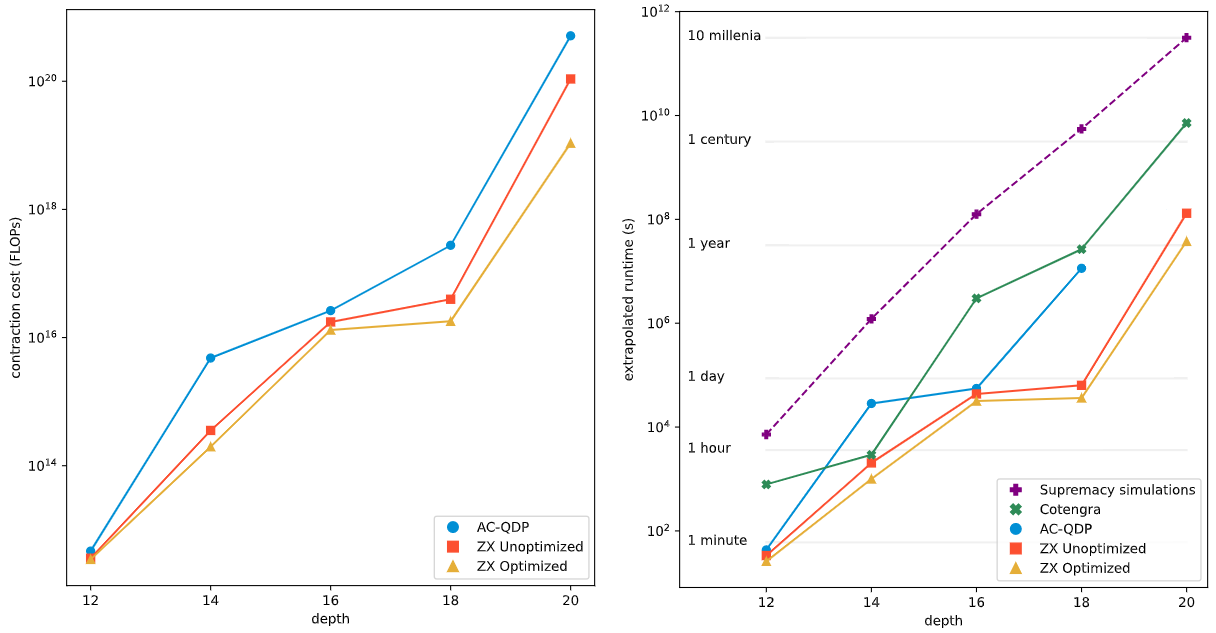}
\caption{Contraction cost and extrapolated runtime for the simulation of random Sycamore circuits. \textit{Supremacy simulations} data from \cite{arute2020supplementary}, \textit{Cotengra} data from \cite{gray2021hyper}, \textit{AC-QDP, ZX Unoptimized, ZX Optimized} simulated on our hardware.}
\end{figure}

\begin{table}[!h]
\begin{center}
\hspace*{-3mm}\begin{tabular}{|c || c c c c c|} 
 \hline
 Depth & 12 & 14 & 16 & 18 & 20 \\ [0.5ex] 
 \hline\hline
 Contraction Cost & $3.448\times10^{12}$ & $1.976\times10^{14}$ & $1.309\times10^{16}$ & $1.791\times10^{16}$ & $1.079\times10^{19}$\\ [0.5ex] 
 \hline
 \# Subtasks & $2^{5}$ & $2^{10}$ & $2^{16}$ & $2^{16}$ & $2^{23}$\\[0.5ex] 
 \hline
 Extrapolated Runtime & $25.98s$ & $998.71s$ & $8.79h$ & $10.09h$ & $435.72d$\\ [0.5ex] 
\hline
\end{tabular}
\caption{Extrapolated benchmarks for the simulation of random circuit sampling on Sycamore using our pre-processing method before running AC-QDP.}
\end{center}
\end{table}

In comparison to previous methods, our tensor networks require far less slicing: for the largest circuits, we used up to $32$ times less sub-tasks than AC-QDP, after only minutes of order finding on the graph-like ZX-diagram.

This work illustrates how formal methods can be used to simplify tensor networks. The results obtained are encouraging, and can be further improved. In particular, better ``proxy'' cost functions can be found for guiding the pivot-based heuristic and a more efficient (possibly parallelized) tree decomposition heuristic would directly lead to better contraction orders via Markov and Shi's method. On the theoretical side, more work has to be done to define a bound on the treewidth of the line-graph of all graph-like diagrams accessible by local complementation. Indeed, we could imagine refining the bound derived by Markov and Shi by considering the minimum of $tw(G'^\ast)$ on the set of graphs $G'$ that are equivalent to $G$ by local complementations. Even though finding this global minimum might be hard, it would help further tighten the complexity bound on the contraction cost of a tensor network.


\paragraph{Acknowledgements.} We would like to thank our colleagues at Atos Quantum Lab for support and helpful discussions, Vivien Vandaele for his help in ZX-Calculus, Océane Koska and Maxime Oliva for early reviews of this work. This work is part of HQI initiative (\url{www.hqi.fr}) and is supported by France 2c030 under the French National Research Agency award number "ANR-22-PNCQ-0002".

\printbibliography

@article{huang2020classical,
  title={Classical simulation of quantum supremacy circuits},
  author={Huang, Cupjin and Zhang, Fang and Newman, Michael and Cai, Junjie and Gao, Xun and Tian, Zhengxiong and Wu, Junyin and Xu, Haihong and Yu, Huanjun and Yuan, Bo and others},
  journal={arXiv preprint arXiv:2005.06787},
  year={2020}
}

@article{arute2019quantum,
  title={Quantum supremacy using a programmable superconducting processor},
  author={Arute, Frank and Arya, Kunal and Babbush, Ryan and Bacon, Dave and Bardin, Joseph C and Barends, Rami and Biswas, Rupak and Boixo, Sergio and Brandao, Fernando GSL and Buell, David A and others},
  journal={Nature},
  volume={574},
  number={7779},
  pages={505--510},
  year={2019},
  publisher={Nature Publishing Group}
}

@article{duncan2020graph,
  title={Graph-theoretic Simplification of Quantum Circuits with the ZX-calculus},
  author={Duncan, Ross and Kissinger, Aleks and Perdrix, Simon and Van De Wetering, John},
  journal={Quantum},
  volume={4},
  pages={279},
  year={2020},
  publisher={Verein zur F{\"o}rderung des Open Access Publizierens in den Quantenwissenschaften}
}

@article{markov2008simulating,
  title={Simulating quantum computation by contracting tensor networks},
  author={Markov, Igor L and Shi, Yaoyun},
  journal={SIAM Journal on Computing},
  volume={38},
  number={3},
  pages={963--981},
  year={2008},
  publisher={SIAM}
}

@article{kotzig1968eulerian,
  title={Eulerian lines in finite 4-valent graphs and their transfomations},
  author={Kotzig, Anton},
  journal={Theory of Graphs},
  pages={219--230},
  year={1968},
  publisher={Academic Press}
}

@article{gogate2012complete,
  title={A complete anytime algorithm for treewidth},
  author={Gogate, Vibhav and Dechter, Rina},
  journal={arXiv preprint arXiv:1207.4109},
  year={2012}
}

@article{o2019parameterization,
  title={Parameterization of tensor network contraction},
  author={O'Gorman, Bryan},
  journal={arXiv preprint arXiv:1906.00013},
  year={2019}
}

@article{gray2021hyper,
  title={Hyper-optimized tensor network contraction},
  author={Gray, Johnnie and Kourtis, Stefanos},
  journal={Quantum},
  volume={5},
  pages={410},
  year={2021},
  publisher={Verein zur F{\"o}rderung des Open Access Publizierens in den Quantenwissenschaften}
}

@phdthesis{perdrix2006modeles,
  title={Modeles formels du calcul quantique: ressources, machines abstraites et calcul par mesure},
  author={Perdrix},
  pages={111},
  year={2006},
  school={Institut National Polytechnique de Grenoble-INPG}
}

@article{lykov2020tensor,
  title={Tensor network quantum simulator with step-dependent parallelization},
  author={Lykov, Danylo and Schutski, Roman and Galda, Alexey and Vinokur, Valerii and Alexeev, Yurii},
  journal={arXiv preprint arXiv:2012.02430},
  year={2020}
}

@article{vincent2022jet,
  title={Jet: Fast quantum circuit simulations with parallel task-based tensor-network contraction},
  author={Vincent, Trevor and O'Riordan, Lee J and Andrenkov, Mikhail and Brown, Jack and Killoran, Nathan and Qi, Haoyu and Dhand, Ish},
  journal={Quantum},
  volume={6},
  pages={709},
  year={2022},
  publisher={Verein zur F{\"o}rderung des Open Access Publizierens in den Quantenwissenschaften}
}

@article{liang2021fast,
  title={Fast search of the optimal contraction sequence in tensor networks},
  author={Liang, Ling and Xu, Jianyu and Deng, Lei and Yan, Mingyu and Hu, Xing and Zhang, Zheng and Li, Guoqi and Xie, Yuan},
  journal={IEEE Journal of Selected Topics in Signal Processing},
  volume={15},
  number={3},
  pages={574--586},
  year={2021},
  publisher={IEEE}
}

@article{dudek2019efficient,
  title={Efficient contraction of large tensor networks for weighted model counting through graph decompositions},
  author={Dudek, Jeffrey M and Duenas-Osorio, Leonardo and Vardi, Moshe Y},
  journal={arXiv preprint arXiv:1908.04381},
  year={2019}
}

@article{chen2018classical,
  title={Classical simulation of intermediate-size quantum circuits},
  author={Chen, Jianxin and Zhang, Fang and Huang, Cupjin and Newman, Michael and Shi, Yaoyun},
  journal={arXiv preprint arXiv:1805.01450},
  year={2018}
}

@article{dumitrescu2018benchmarking,
  title={Benchmarking treewidth as a practical component of tensor network simulations},
  author={Dumitrescu, Eugene F and Fisher, Allison L and Goodrich, Timothy D and Humble, Travis S and Sullivan, Blair D and Wright, Andrew L},
  journal={PloS one},
  volume={13},
  number={12},
  pages={e0207827},
  year={2018},
  publisher={Public Library of Science San Francisco, CA USA}
}

@article{guerreschi2020intel,
  title={Intel Quantum Simulator: A cloud-ready high-performance simulator of quantum circuits},
  author={Guerreschi, Gian Giacomo and Hogaboam, Justin and Baruffa, Fabio and Sawaya, Nicolas PD},
  journal={Quantum Science and Technology},
  volume={5},
  number={3},
  pages={034007},
  year={2020},
  publisher={IOP Publishing}
}

@article{aaronson2004improved,
  title={Improved simulation of stabilizer circuits},
  author={Aaronson, Scott and Gottesman, Daniel},
  journal={Physical Review A},
  volume={70},
  number={5},
  pages={052328},
  year={2004},
  publisher={APS}
}

@article{vidal2003efficient,
  title={Efficient classical simulation of slightly entangled quantum computations},
  author={Vidal, Guifr{\'e}},
  journal={Physical review letters},
  volume={91},
  number={14},
  pages={147902},
  year={2003},
  publisher={APS}
}

@misc{vilmart2018nearoptimal,
      title={A Near-Optimal Axiomatisation of ZX-Calculus for Pure Qubit Quantum Mechanics}, 
      author={Renaud Vilmart},
      year={2018},
      eprint={1812.09114},
      archivePrefix={arXiv},
      primaryClass={quant-ph}
}

@article{Coecke2007graphicalcalculus,
    author = {Coecke, Bob and Duncan, Ross},
    title = {A graphical calculus for quantum observables},
    year = {2007},
    journal = {Preprint},
    link = {http://www.cs.ox.ac.uk/people/bob.coecke/GreenRed.pdf},
    keywords = {ZX-Calculus, Foundational},
}

@article{Biamonte_2015,
	doi = {10.1007/s10955-015-1276-z},
	year = 2015,
	publisher = {Springer Science and Business Media {LLC}},
	volume = {160},
	number = {5},
	pages = {1389--1404},
	author = {Jacob D. Biamonte and Jason Morton and Jacob Turner},
	title = {Tensor Network Contractions for {\#}{SAT}},
	journal = {Journal of Statistical Physics}
}

@article{suzuki2021qulacs,
  title={Qulacs: a fast and versatile quantum circuit simulator for research purpose},
  author={Suzuki, Yasunari and Kawase, Yoshiaki and Masumura, Yuya and Hiraga, Yuria and Nakadai, Masahiro and Chen, Jiabao and Nakanishi, Ken M and Mitarai, Kosuke and Imai, Ryosuke and Tamiya, Shiro and others},
  journal={Quantum},
  volume={5},
  pages={559},
  year={2021},
  publisher={Verein zur F{\"o}rderung des Open Access Publizierens in den Quantenwissenschaften}
}

@article{schollwock2011density,
  title={The density-matrix renormalization group in the age of matrix product states},
  author={Schollw{\"o}ck, Ulrich},
  journal={Annals of physics},
  volume={326},
  number={1},
  pages={96--192},
  year={2011},
  publisher={Elsevier}
}

@article{Bravyi2019simulationofquantum,
  title = {Simulation of quantum circuits by low-rank stabilizer decompositions},
  author = {Bravyi, Sergey and Browne, Dan and Calpin, Padraic and Campbell, Earl and Gosset, David and Howard, Mark},
  journal = {{Quantum}},
  issn = {2521-327X},
  publisher = {{Verein zur F{\"{o}}rderung des Open Access Publizierens in den Quantenwissenschaften}},
  volume = {3},
  pages = {181},
  month = sep,
  year = {2019}
}

@article{jozsa2008matchgates,
  title={Matchgates and classical simulation of quantum circuits},
  author={Jozsa, Richard and Miyake, Akimasa},
  journal={Proceedings of the Royal Society A: Mathematical, Physical and Engineering Sciences},
  volume={464},
  number={2100},
  pages={3089--3106},
  year={2008},
  publisher={The Royal Society London}
}

@article{jeandel2020completeness,
  title={Completeness of the ZX-Calculus},
  author={Jeandel, Emmanuel and Perdrix, Simon and Vilmart, Renaud},
  journal={Logical Methods in Computer Science},
  volume={16},
  number={2},
  pages={11--1},
  year={2020}
}

@article{arute2020supplementary,
  title={Supplementary information for ‘Quantum supremacy using a programmable superconducting processor,’},
  author={Arute, F and Arya, K and Babbush, R and others},
  journal={Nat. Int. Wkly. J. Sci},
  volume={574},
  pages={505--505},
  year={2020}
}

@article{blondel2008fast,
  title={Fast unfolding of communities in large networks},
  author={Blondel, Vincent D and Guillaume, Jean-Loup and Lambiotte, Renaud and Lefebvre, Etienne},
  journal={Journal of statistical mechanics: theory and experiment},
  volume={2008},
  number={10},
  pages={P10008},
  year={2008},
  publisher={IOP Publishing}
}

@InProceedings{SciPyProceedings_11,
  author =       {Aric A. Hagberg and Daniel A. Schult and Pieter J. Swart},
  title =        {Exploring Network Structure, Dynamics, and Function using NetworkX},
  booktitle =   {Proceedings of the 7th Python in Science Conference},
  pages =     {11 - 15},
  address = {Pasadena, CA USA},
  year =      {2008},
  editor =    {Ga\"el Varoquaux and Travis Vaught and Jarrod Millman},
}

@misc{mtool,
  author={Oepen, Stephan and others},
  title={mtool},
  year={2019},
  url={https://github.com/cfmrp/mtool},
}

\newpage
\appendix
\section{Additionnal definitions}
\label{appendix}

\begin{algorithm}
\caption*{Functions used in simulated annealing for graph-like optimization}
\begin{algorithmic}
\Function{Temperature}{$\mathrm{prog}$} \Comment{Temperature function for simulated annealing}
  \State { $\triangleright$ $\mathrm{prog} = step / nb\_steps$, progress of the simulated annealing}
  \State { $\triangleright$ Returns progress on decreasing exponential normalized from 1 to 0}
  \State \Return{$\dfrac{e^{\text{-}\mathrm{prog}} - \nicefrac{1}{e}}{1 - \nicefrac{1}{e}}$}
\EndFunction
\Statex
\Function{Accept}{cost, new\_cost, $\tau$} \Comment{Probability to accept new solution}
\State { $\triangleright$ cost, new\_cost = energy$_{current}$, energy$_{candidate}$}
  \If{new\_cost $<$ cost}
    \State \Return{$1$} \Comment{Greedily accepts better solutions}
  \Else
    \State \Return{$e^{\text{-}log(log(\text{new\_cost}) - log(\text{cost}) + 1) / \tau}$}
  \EndIf
\EndFunction
\end{algorithmic}
\end{algorithm}

\begin{figure}[h!]
\centering
\includegraphics[scale=0.85]{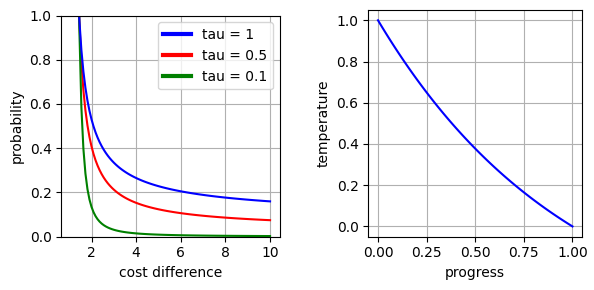}
\caption{\textsc{Accept} and \textsc{Temperature} functions for simulated annealing}
\end{figure}

\begin{algorithm}
\caption*{Implementation of graph-like unfusion based rules}
\begin{algorithmic}
\Function{MatchingUnfusion}{$G$, $node$} \Comment{Uses cycle basis algorithm to find the best couples of  nodes to preserve when unfusioning a high degree node}
  \State {$cycles_{node} = $ set of cycles in \Call{CycleBasis}{$G$} containing $node$}
  \State {$H = $ new graph}
  \State {$N = $ set of neighbors of $node$}
  \For{$u$ in $N$}
    \For{$v \neq u$ in $N$}
        \State {$in\_common = $ set of cycles in $cycles_{node}$ containing both $u$ and $v$}
        \State {Add weighted edge $(u,v)$ to $H$ of weight length($in\_common$)}
    \EndFor
  \EndFor
  \State \Return{\Call{MaxWeightMatching}{$H$}}
\EndFunction
\Statex
\Function{Unfusion}{$graph-like$, $node$} \Comment{Turns a high degree node (greater than 4) into a ternary-tree-like structure, preserves graph-likeness}
  \State {$G = $ underlying graph from input $graph-like$}
  \State {$matching = $ \Call{MatchingUnfusion}{$G$,$node$}}
  \While{$matching$ is not empty}
    \State {$edge (u,v) = matching$.pop()}
    \State {Remove edge $(node,u)$ from $G$}
    \State {Remove edge $(node,v)$ from $G$}
    \Statex
    \State {Add node with fresh name $w$ to $G$} \Comment{The unfusioned node}
    \State {Add edge $(u,w)$ to $G$}
    \State {Add edge $(v,w)$ to $G$}
    \Statex
    \State {Add node with fresh name $m$ to $G$} \Comment{Added to preserve graph-likeness}
    \State {Add edge $(w,m)$ to $G$}
    \State {Add edge $(m,node)$ to $G$}
  \EndWhile
\EndFunction
\Statex
\Function{Splitting}{graph-like} \Comment{Decreases the degree of the graph-like to 3, preserves graph-likeness}
  \State {$G = $ underlying graph from input $graph-like$}
  \State {$high\_degree = $ set of nodes of $G$ of degree greater than 3}
  \While{$high\_degree$ is not empty}
    \State {$node = high\_degree$.pop()}
    \State {\Call{Unfusion}{graph-like,$node$}}
    \State {Recompute $high\_degree$}
  \EndWhile
\EndFunction
\end{algorithmic}
\end{algorithm}

\begin{algorithm}
\caption*{Precontraction method and quick\_tw heuristic}
\begin{algorithmic}
\Function{Precontract}{$network$}
  \State { $\triangleright$ Compresses the network into a smaller, much denser one that encapsulates the structure of the original network and preserves the treewidth of its line-graph}
  \State {$G = $ underlying graph from input $network$}
  \Statex
  \State { $\triangleright$ First merging all leaves to their neighbor recursively}
  \State {$leaves = $ set of all degree 1 nodes in $G$}
  \While{$leaves$ is not empty}
    \State {$leaf = leaves$.pop()}
    \State {$parent = $ parent of $leaf$}
    \State {Contract $leaf$ to $parent$ in $G$}
    \If{degree of $parent$ in $G$ is 1}
      \State {Add $parent$ to $leaves$}
    \EndIf
  \EndWhile
  \Statex
  \State { $\triangleright$ Then contracting non-triangle edges}
  \State {$candidates = $ set of all edges such that both ends are degree 2 nodes}
  \While{$candidates$ is not empty}
    \State {$edge = candidates$.pop()} \Comment{$edge = (u,v)$}
    \State {Contract $v$ to $u$ in $G$}
    \For{\textbf{each} neighbor $w$ of $u$}
      \If{edge $(v,w)$ is in $candidates$} \Comment{Update the edges name}
        \State {Remove edge $(v,w)$ from $candidates$}
        \State {Add edge $(u,w)$ to $candidates$}
      \EndIf
    \EndFor
  \EndWhile
\EndFunction
\Statex
\Function{quick\_tw}{$network$} \Comment{Cost function used for simulated annealing}
  \State {$G = $ underlying graph from input $network$}
  \State {\Call{Precontract}{$G$}}
  \State {$lg = $ line-graph of $G$}
  \State \Return{\Call{QuickBB}{$lg$}}
\EndFunction
\end{algorithmic}
\end{algorithm}

\end{document}